\PassOptionsToPackage{unicode}{hyperref}
\PassOptionsToPackage{hyphens}{url}
\PassOptionsToPackage{dvipsnames,svgnames,x11names}{xcolor}
\documentclass[12pt]{article}
\usepackage{textcomp}
 \usepackage{amssymb}
\usepackage{amsmath}
\usepackage{amsthm}
\newtheorem{theorem}{Theorem}
\newtheorem{assumption}{Assumption}
\newtheorem{proposition}{Proposition}
\newtheorem{lemma}{Lemma}

\newtheorem{definition}{Definition}

\newtheorem*{remark}{Remark}
\usepackage{xr}
\usepackage{enumerate}
\usepackage{setspace}
\usepackage{natbib}
\usepackage{iftex}
\ifPDFTeX
  \usepackage[T1]{fontenc}
  \usepackage[utf8]{inputenc}
  \usepackage{textcomp} 
\else 
  \usepackage{unicode-math}
  \defaultfontfeatures{Scale=MatchLowercase}
  \defaultfontfeatures[\rmfamily]{Ligatures=TeX,Scale=1}
\fi
\usepackage{lmodern}
\ifPDFTeX\else  
\fi
\IfFileExists{upquote.sty}{\usepackage{upquote}}{}
\IfFileExists{microtype.sty}{
  \usepackage[]{microtype}
  \UseMicrotypeSet[protrusion]{basicmath} 
}{}
\makeatletter
\@ifundefined{KOMAClassName}{
  \IfFileExists{parskip.sty}{%
    \usepackage{parskip}
  }{
    \setlength{\parindent}{0pt}
    \setlength{\parskip}{6pt plus 2pt minus 1pt}}
}{
  \KOMAoptions{parskip=half}}
\makeatother
\usepackage{xcolor}
\makeatletter
\ifx\paragraph\undefined\else
  \let\oldparagraph\paragraph
  \renewcommand{\paragraph}{
    \@ifstar
      \xxxParagraphStar
      \xxxParagraphNoStar
  }
  \newcommand{\xxxParagraphStar}[1]{\oldparagraph*{#1}\mbox{}}
  \newcommand{\xxxParagraphNoStar}[1]{\oldparagraph{#1}\mbox{}}
\fi
\ifx\subparagraph\undefined\else
  \let\oldsubparagraph\subparagraph
  \renewcommand{\subparagraph}{
    \@ifstar
      \xxxSubParagraphStar
      \xxxSubParagraphNoStar
  }
  \newcommand{\xxxSubParagraphStar}[1]{\oldsubparagraph*{#1}\mbox{}}
  \newcommand{\xxxSubParagraphNoStar}[1]{\oldsubparagraph{#1}\mbox{}}
\fi
\makeatother

\usepackage{longtable,booktabs,array}
\usepackage{calc} 
\usepackage{etoolbox}
\makeatletter
\patchcmd\longtable{\par}{\if@noskipsec\mbox{}\fi\par}{}{}
\makeatother
\IfFileExists{footnotehyper.sty}{\usepackage{footnotehyper}}{\usepackage{footnote}}
\makesavenoteenv{longtable}
\usepackage{graphicx}
\makeatletter
\def\maxwidth{\ifdim\Gin@nat@width>\linewidth\linewidth\else\Gin@nat@width\fi}
\def\maxheight{\ifdim\Gin@nat@height>\textheight\textheight\else\Gin@nat@height\fi}
\makeatother
\setkeys{Gin}{width=\maxwidth,height=\maxheight,keepaspectratio}
\makeatletter
\def\fps@figure{htbp}
\makeatother

\makeatletter
\@ifpackageloaded{caption}{}{\usepackage{caption}}
\AtBeginDocument{%
\ifdefined\contentsname
  \renewcommand*\contentsname{Table of contents}
\else
  \newcommand\contentsname{Table of contents}
\fi
\ifdefined\listfigurename
  \renewcommand*\listfigurename{List of Figures}
\else
  \newcommand\listfigurename{List of Figures}
\fi
\ifdefined\listtablename
  \renewcommand*\listtablename{List of Tables}
\else
  \newcommand\listtablename{List of Tables}
\fi
\ifdefined\figurename
  \renewcommand*\figurename{Figure}
\else
  \newcommand\figurename{Figure}
\fi
\ifdefined\tablename
  \renewcommand*\tablename{Table}
\else
  \newcommand\tablename{Table}
\fi
}
\@ifpackageloaded{float}{}{\usepackage{float}}
\floatstyle{ruled}
\@ifundefined{c@chapter}{\newfloat{codelisting}{h}{lop}}{\newfloat{codelisting}{h}{lop}[chapter]}
\floatname{codelisting}{Listing}

\makeatother
\makeatletter
\@ifpackageloaded{caption}{}{\usepackage{caption}}
\@ifpackageloaded{subcaption}{}{\usepackage{subcaption}}
\makeatother

\ifLuaTeX
  \usepackage{selnolig}  
\fi
\usepackage[]{natbib}
\usepackage{bookmark}

\IfFileExists{xurl.sty}{\usepackage{xurl}}{} 
\hypersetup{
  pdftitle={Robust Estimation under Spurious Correlation via Counterfactual Augmentation},
  pdfauthor={Author 1; Author 2},
  pdfkeywords={3 to 6 keywords, that do not appear in the title},
  colorlinks=true,
  linkcolor={blue},
  filecolor={Maroon},
  citecolor={Blue},
  urlcolor={Blue},
  pdfcreator={LaTeX via pandoc}}

\usepackage{amsmath} 

\newcommand{\anon}{1}

\begin{document}
\setlength{\parskip}{3pt plus 1pt minus 1pt}

\def\spacingset#1{\renewcommand{\baselinestretch}%
{#1}\small\normalsize} \spacingset{1}


\if1\anon
{
  \title{\bf Robust Estimation under Spurious Correlation via Statistical Counterfactual Augmentation}
  \author{Yi-Hui Zhou\textsuperscript{*} \hspace{.2cm}\\
    Department of Statistics, North Carolina State University\\George Sun \\High Dimensional Predictive Biology Lab, North Carolina State University}
  \maketitle
} \fi

\if0\anon
{
  \bigskip
  \bigskip
  \bigskip
  \begin{center}
    {\LARGE\bf  Robust Estimation under Spurious Correlation via Statistical Counterfactual Augmentation}\hspace{.2cm}\\
\end{center}
  \medskip
} \fi

\bigskip
\bigskip
\begin{abstract}
Spurious correlations between nuisance features and outcomes present a fundamental challenge for statistical learning and inference. Counterfactual Data Augmentation (CDA), the practice of resampling nuisance features while preserving features of interest, has emerged as a popular strategy to mitigate such correlations, yet its statistical properties, and its reliance on an untestable validity assumption, remain largely unexplored. We provide a statistical treatment of CDA, termed Statistical CDA (SCDA), with two components. First, casting SCDA as a simulated-moment M-estimator, we establish consistency, asymptotic normality with a sandwich covariance and a finite-augmentation-draw variance correction, oracle efficiency for the target parameter, and the equivalence of iterative and one-shot procedures via stochastic approximation, yielding valid large-sample inference where previously none was available. Second, because the augmentation's validity cannot be tested from the observed data when features are entangled (a fact we prove), we develop a sensitivity analysis that bounds the residual nuisance--outcome dependence by a parameter $\Gamma$, giving sharp identified intervals for the target with valid confidence intervals and a breakdown frontier; the point-identified theory is the $\Gamma=0$ special case. Simulations confirm the efficiency gains under validity and the robustness bounds under possible invalidity, placing a widely used augmentation method on a transparent inferential footing.
\end{abstract}

\noindent%
{\it Keywords:} Counterfactual augmentation; Robust estimation; Spurious correlation; M-estimation; Asymptotic theory.
\vfill

\newpage
\spacingset{1.75} 

\section{Introduction}\label{sec-intro}

Counterfactual data augmentation (CDA) has become a widely used strategy in machine learning for mitigating spurious correlations and improving the robustness of predictive models. The basic idea is to generate synthetic examples in which nuisance features are perturbed while causal features are preserved, and then to retrain models on the combined original and augmented data. Empirical studies have demonstrated that CDA can improve generalization and interpretability in settings such as natural language processing and computer vision, where strong correlations between labels and nuisance variables are common \citep{lu2020gender,kaushik2020learning,plyler2021cda}.  

Despite its empirical success, CDA currently lacks a rigorous statistical foundation. Existing work has primarily been heuristic, focusing on implementation and empirical performance, with little investigation of CDA as an estimation procedure. In particular, there has been no systematic study of its large-sample behavior, efficiency, or robustness. This gap motivates the present article, in which we develop a statistical theory of CDA. To distinguish our contribution from the purely algorithmic literature, we refer to our framework as \emph{Statistical CDA}.  

The purpose of this article is to provide a formal foundation for Statistical CDA. We show that one-shot CDA estimators can be cast as M-estimators with a nested expectation structure induced by the augmentation mechanism. This formulation allows us to extend classical asymptotic theory \citep{van2000asymptotic,vaart1998asymptotic} to the CDA setting. In particular, we establish conditions under which Statistical CDA estimators are consistent and asymptotically normal, and we derive their asymptotic covariance structure. These results yield valid large-sample inference for parameters estimated under CDA, a topic that, to the best of our knowledge, has not previously been rigorously addressed.  

Our analysis also clarifies efficiency and robustness: Statistical CDA attains vanishing excess risk
under the augmentation-induced target at the parametric rate, whereas empirical risk minimization
incurs a fixed positive excess risk under strong spurious correlation, and its influence function
remains bounded under mild augmentation misspecification. Iterative and adaptive procedures, though not
strictly M-estimators, are analyzed as stochastic approximation schemes and shown to share the
one-shot limiting distribution, formalizing heuristic arguments from the machine learning literature
\citep{plyler2025icda}.

The results above all presume that the augmentation is \emph{valid}: resampling the nuisance renders it independent of the outcome given the features of interest. This assumption is not testable when the two are entangled in the observed representation, which is exactly the regime, such as image data, in which CDA is most used. Rather than treat validity as given, we make it the object of analysis: we introduce a sensitivity model that bounds the residual nuisance--outcome dependence by a parameter $\Gamma$, derive sharp identified intervals for the target with valid confidence intervals, and characterize a breakdown frontier quantifying how much invalidity a substantive conclusion can withstand (Section~\ref{sec:sensitivity}). The point-identified theory is recovered at $\Gamma=0$. This turns an untestable premise into a graded, reportable robustness statement that complements the estimation theory above.

In summary, the article makes four contributions. (i) An estimand identity: the augmented population risk equals the risk under the nuisance-marginalized target $P_{\mathcal{A}}$ (Lemma~\ref{lem:identity}), so the augmentation \emph{defines} the estimand; this resolves what augmentation-based training estimates. (ii) Large-sample inference for that estimand: consistency, asymptotic normality with a sandwich covariance, and a finite-augmentation-draw correction without which nominal coverage fails (Proposition~\ref{prop:mcvar}). (iii) An augmentation-invariance and oracle-efficiency theorem: every valid augmentation yields the same asymptotic variance for the target parameter, equal to that of the oracle with direct access to $X_1$ (Theorem~\ref{thm:oracle}). (iv) A sensitivity analysis for augmentation validity, whose untestability from the observed data we prove (Proposition~\ref{prop:untestable}), with sharp identified intervals and a breakdown frontier (Section~\ref{sec:sensitivity}).

In addition to the theoretical results, we investigate the finite-sample behavior of Statistical CDA through simulation studies. These confirm the variance reduction predicted by our theory and demonstrate that inference based on the derived asymptotic covariance matrix achieves nominal coverage, and a controlled entangled-representation design isolates the regime in which augmentation is necessary. Taken together, our results connect a widely used augmentation method with the established theory of M-estimation and stochastic approximation, thereby placing CDA on rigorous statistical footing.

\section{Methodology}\label{sec:method}

We now formalize the framework of \emph{Statistical Counterfactual Data Augmentation} (SCDA). 
This section introduces the notation, defines the SCDA estimator, and compares it with related approaches. 
Our formulation makes clear the departures from standard empirical risk minimization and motivates the asymptotic analysis in Section~\ref{sec:theory}.  

\subsection{Framework and Notation}

Let $\{(X_i,Y_i): i=1,\ldots,n\}$ denote an independent and identically distributed (i.i.d.) sample from an unknown distribution $P_0$ on $\mathcal{X}\times\mathcal{Y}$, where $X_i \in \mathcal{X}$ are covariates and $Y_i \in \mathcal{Y}$ are outcomes. We suppose that each $X_i$ can be decomposed as $X_i = (X_{i1},X_{i2})$, where $X_{i1}$ represents features of primary or causal interest, and $X_{i2}$ represents nuisance or spurious features that may be correlated with $Y_i$.  

Let $f_\theta: \mathcal{X} \to \mathcal{Y}$ be a parametric prediction model indexed by $\theta \in \Theta \subset \mathbb{R}^p$, and let $\ell(y,f_\theta(x))$ denote a loss function (e.g., logistic loss for classification or squared loss for regression). The conventional empirical risk minimizer (ERM) is defined as
\[
\hat\theta_{\mathrm{ERM}} = \arg\min_{\theta \in \Theta} \; \frac{1}{n}\sum_{i=1}^n \ell(Y_i, f_\theta(X_i)).
\]

\subsection{The SCDA Estimator}

The central idea of SCDA is to generate counterfactual covariates by perturbing nuisance components while preserving causal ones, thereby attenuating the influence of spurious associations. Our use of ``counterfactual'' follows the machine learning literature \citep{kaushik2020learning} and refers to data generated by perturbing nuisance features. Under the assumption that $X_2$ has no causal effect on $Y$ given $X_1$, our augmented samples correspond to counterfactuals in the structural causal model (SCM) sense \citep{pearl2009causality}. Specifically, if $Y = f(X_1, X_2, U)$ is a structural equation in which $X_2$ does not appear in $f$ (i.e., $X_2$ has no causal effect on $Y$ given $X_1$), then augmented samples $\tilde{X} = (X_1, \tilde{X}_2)$ with $\tilde{X}_2 \perp Y \mid X_1$ correspond to the counterfactual ``what would $Y$ be if $X_2$ were $\tilde{X}_2$ instead of $X_2$?'' The three construction strategies in Section~\ref{sec:practical_guidance} (permutation-based, generative, deletion or masking) produce such counterfactuals when they preserve $X_1$ and break the association between $X_2$ and $Y$ given $X_1$, under this same causal assumption. This causal assumption is for \emph{interpretation} (SCM counterfactuals) and \emph{motivation} (why attenuate $X_2$). The \emph{statistical} validity of the SCDA estimator (consistency, asymptotic normality in Section~\ref{sec:theory}) relies on the formal conditions stated there (Assumptions~\ref{ass:loss}--\ref{ass:nonsing}: loss regularity, augmentation, compactness, identifiability, nonsingularity); the causal assumption is \emph{not} among them. Absent this causal assumption, the augmented samples remain heuristic perturbations in the sense of the machine learning (e.g., computer vision and NLP) literature. Formally, let $\mathcal{A}(x)$ denote a conditional distribution over augmented covariates $\tilde X$ given $X=x$. The distribution $\mathcal{A}(x)$ preserves $X_1$ but perturbs $X_2$ so as to weaken or eliminate its association with $Y$. Examples include swapping nuisance features across samples \citep{zeng2020counterfactual} or generating synthetic nuisance features conditioned only on $X_1$.

Given an augmentation mechanism $\mathcal{A}$, the augmented population risk is
\[
R(\theta) = \mathbb{E}_{(X,Y)\sim P_0} \; \mathbb{E}_{\tilde X \sim \mathcal{A}(X)} \big[ \ell(Y, f_\theta(\tilde X)) \big],
\]
with empirical analogue
\[
R_n(\theta) = \frac{1}{n}\sum_{i=1}^n \; \mathbb{E}_{\tilde X_i \sim \mathcal{A}(X_i)} \big[ \ell(Y_i, f_\theta(\tilde X_i)) \big].
\]
We define the population minimizer as $\theta_0 = \arg\min_{\theta \in \Theta} R(\theta)$, which is the unique solution to $\Psi(\theta_0) = 0$ under Assumption~\ref{ass:ident} (see Section~\ref{sec:theory}).

We define the SCDA estimator as
\[
\hat\theta_{\mathrm{SCDA}} = \arg\min_{\theta \in \Theta} R_n(\theta).
\]
Under differentiability of $R_n(\theta)$ and when $\hat\theta_{\mathrm{SCDA}}$ is a stationary point, this is equivalent to solving the estimating equation
\[
\Psi_n(\theta) = \frac{1}{n}\sum_{i=1}^n \; \mathbb{E}_{\tilde X_i \sim \mathcal{A}(X_i)} \nabla_\theta \ell(Y_i, f_\theta(\tilde X_i)) = 0.
\]
Assumption~\ref{ass:ident} (introduced in Section~\ref{sec:theory}) ensures that the stationary point is the unique global minimizer.

Thus $\hat\theta_{\mathrm{SCDA}}$ is an M-estimator with a \emph{nested expectation structure}, reflecting both sampling variability from $(X_i,Y_i)$ and augmentation variability from $\tilde X_i \sim \mathcal{A}(X_i)$.  
The feature distinguishing SCDA from classical M-estimation is the \emph{nested expectation}: the estimating function is an inner expectation over augmented covariates $\tilde X \sim \mathcal{A}(X)$, which introduces an additional variance component (Section~\ref{sec:montecarlo}) and requires analysis beyond the standard theory. In the one-shot estimator the augmentation $\mathcal{A}$ is fixed and does not depend on $\theta$; the adaptive case, in which the augmentation evolves with the current estimate, is treated in Section~\ref{sec:iterative}. A score identity in the Supplement makes the role of $\theta$ precise: when $\mathcal{A}$ is $\theta$-free, the estimating function $\Psi_n$ is exactly the gradient of $R_n$, whereas a $\theta$-dependent augmentation contributes an additional covariance term between the loss and the augmentation score. The analogous quantity at the level of the estimating equation, the sensitivity $G_0$ of the population score to the augmentation index, must vanish at $\theta_0$ in the adaptive case; this is the orthogonality condition (Assumption~\ref{ass:orth}).

\subsection{Comparison with Related Estimators}

It is instructive to contrast SCDA with familiar alternatives.  
The ERM estimator $\hat\theta_{\mathrm{ERM}}$ minimizes risk based solely on observed data, without augmentation.  
Inverse probability weighting (IPW) estimators \citep{horvitz1952generalization,rosenbaum1983central} address confounding by reweighting observations rather than altering covariates.  
By contrast, SCDA modifies the covariate distribution directly, producing an estimator that can be interpreted as implicitly down-weighting the information carried by spurious features.  
Section~\ref{sec:efficiency} makes these relationships precise by comparing the asymptotic variances of SCDA, ERM, and IPW estimators.  

\subsection{Functionals of the SCDA Estimator}

While our asymptotic results are formulated in terms of the parameter 
$\hat{\theta}_{\mathrm{SCDA}}$, practical interest often centers on functionals 
$\mu = g(\theta_0)$ for smooth maps $g: \Theta \to \mathbb{R}^q$. Examples include groupwise prediction accuracies $\mu_g = \mathbb{P}(f_{\theta_0}(X) = Y \mid G = g)$ 
for subgroups defined by nuisance attributes
or risk measures such as worst-group loss or fairness metrics.
By the delta method, if $g$ is continuously differentiable at $\theta_0$, then
\[
\sqrt{n}\left(g(\hat{\theta}_{\mathrm{SCDA}}) - g(\theta_0)\right) 
\;\overset{d}{\longrightarrow}\; 
\mathcal{N}\left(0, \nabla g(\theta_0)^\top \Sigma \, \nabla g(\theta_0)\right),
\]
where $\Sigma$ is given in Theorem~\ref{thm:clt}. This result enables inference 
on any smooth functional of interest, with variance consistently estimated by 
$\nabla g(\hat{\theta})^\top \hat{\Sigma} \, \nabla g(\hat{\theta})$.

In our empirical studies (Section~\ref{sec:simulations}), we evaluate both
coefficient estimates in parametric models and predictive functionals such as
the model prediction at a covariate point, all of which fall within this framework.

\subsection{Practical Guidance for Constructing Valid Augmentation Mechanisms}
\label{sec:practical_guidance}

The augmentation $\mathcal{A}$ must preserve the feature of interest $X_1$ while rendering the
nuisance $X_2$ uninformative about $Y$ (Assumption~\ref{ass:aug}). Its construction rests on
domain knowledge of which features are of interest and which are nuisance: anatomy versus imaging
artifacts in medical imaging, semantic content versus stylistic markers in text. Three general
strategies are available: \emph{permutation}, which reassigns $X_2$ across observations (simple, no
modeling, but may create implausible combinations); \emph{conditional generation}, which samples
$\tilde X_2 \sim \hat p(X_2\mid X_1)$ (more realistic, at the cost of modeling error); and
\emph{masking}, which sets $\tilde X_2$ to a null value (simple, but may induce support shift).

Formally, write $\mathcal{A}(x) = \delta_{x_1} \otimes q(\cdot \mid x)$ for the
conditional law of $\tilde X = (x_1, \tilde X_2)$. The three strategies correspond to explicit
conditional densities: permutation draws $\tilde X_2$ from the marginal,
$q(\tilde x_2 \mid x) = p_{X_2}(\tilde x_2)$ (empirically, the uniform law on the observed
$\{X_{j2}\}_{j=1}^n$); conditional generation draws from a fitted conditional,
$q(\tilde x_2 \mid x) = \hat p(\tilde x_2 \mid x_1)$; and masking sets
$q(\cdot \mid x) = \delta_{x_2^{0}}$ for a fixed null value $x_2^{0}$. As a concrete example, in the
Gaussian model of Section~\ref{sec:simulations} ($Y = X_1 + \varepsilon$, $X_2 = \rho Y + \eta$ with
standard normal errors), permutation yields $\tilde X_2 \sim N(0,\, 2\rho^2 + 1)$ independent of
$(X_1, Y)$, so $\mathcal{A}(x) = \delta_{x_1} \otimes N(0,\, 2\rho^2+1)$ in closed form and
Assumption~\ref{ass:aug} holds exactly. None of these mechanisms depends on $\theta$;
$\theta$-dependence arises only for adaptive schemes in which the augmentation is indexed by the
current estimate (Section~\ref{sec:iterative}).

Before applying SCDA, practitioners should check that $\mathcal{A}$ approximately satisfies
Assumption~\ref{ass:aug}: (i) \emph{feature preservation}, that $\tilde X_1 = X_1$ (the requirement of
Assumption~\ref{ass:aug}(i); e.g.\ perceptual similarity on the regions of interest, or
$\|\tilde X_{i1}-X_{i1}\|$ small); and (ii) \emph{spurious
attenuation}, that the conditional association $\mathrm{Cor}(\tilde X_2, Y\mid X_1)$ is substantially
smaller than $\mathrm{Cor}(X_2,Y\mid X_1)$, using a conditional-independence test where feasible
\citep{zhang2012kernel}. These checks require that $X_1$ be observable, as in the
separable case or when an annotated validation sample is available; in the entangled case they are
unavailable, and the sensitivity analysis of Section~\ref{sec:sensitivity} is the appropriate
recourse. If latent confounding renders exact independence unattainable, SCDA remains
useful under the weaker requirement that the augmentation \emph{reduces} the spurious association;
we recommend sensitivity analysis across augmentation strategies. Detailed diagnostics, thresholds,
and guidance for handling detected violations (which, when rare in the sense
$\epsilon_n=o(n^{-1/2})$ of Proposition~\ref{prop:approx_preserve}, leave the asymptotic
distribution intact) are given in Supplementary Section~\ref{S-sec:practical_supp}.

\section{Asymptotic Properties}
\label{sec:theory}

We now establish the large-sample properties of the statistical counterfactual data augmentation (SCDA) estimator $\hat\theta_{\mathrm{SCDA}}$. Our analysis builds on classical M-estimation theory \citep{vaart1998asymptotic,van2000asymptotic}. The inner expectation over augmentations places SCDA within the framework of simulated-moment / simulated-score estimation \citep{mcfadden1989method,pakes1989simulation}: the estimating equation is a moment condition evaluated by simulation from $\mathcal{A}$, and the finite-draw variance inflation we derive in Section~\ref{sec:montecarlo} is the counterpart of the simulation-noise term in that literature. What is specific to SCDA is that the simulation distribution $\mathcal{A}$ is not a computational device for an intractable integral but a modeling choice that \emph{defines} the estimand (Lemma~\ref{lem:identity}); our contribution is to make the resulting inference, including the finite-draw correction, usable for counterfactual augmentation, and to characterize its efficiency (Section~\ref{sec:efficiency}).

\subsection{Assumptions}

Let 
\[
\Psi(\theta) = \mathbb{E}_{(X,Y)\sim P_0}\,\mathbb{E}_{\tilde X \sim \mathcal{A}(X)} \nabla_\theta \ell(Y,f_\theta(\tilde X))
\]
denote the population estimating function. We define $\theta_0 \in \Theta$ as a solution to $\Psi(\theta_0)=0$. Under standard conditions (e.g., strict convexity of the loss in $\theta$, or when $\Psi(\theta)$ is injective), uniqueness is expected to hold. 

We impose the following regularity conditions for the theoretical analysis.

\begin{assumption}[Loss and Model Regularity]\label{ass:loss}
The loss function $\ell(y,f_\theta(x))$ is measurable in $(y,x)$, twice continuously differentiable in $\theta$, and satisfies
\[
\sup_{\theta \in \Theta}\, \mathbb{E}\,\|\nabla_\theta \ell(Y,f_\theta(X))\|^2 < \infty.
\]
\end{assumption}

\begin{assumption}[Augmentation Mechanism]\label{ass:aug}
For each $x = (x_1, x_2) \in \mathcal{X}$ and $\theta \in \Theta$:
\begin{enumerate}[(i)]
\item The augmentation distribution $\mathcal{A}(x)$ is supported on 
$\{(x_1, \tilde{x}_2): \tilde{x}_2 \in \mathcal{X}_2\}$, i.e., 
$\mathbb{P}(\tilde{X}_1 = x_1 \mid X = x) = 1$.

\item The map $(x,y,\theta) \mapsto \mathbb{E}_{\tilde{X} \sim \mathcal{A}(x)}
[\nabla_\theta \ell(y, f_\theta(\tilde{X}))]$ is continuously differentiable in $\theta$
for $P_0$-almost every $(x,y)$.

\item There exist integrable functions $G, H: \mathcal{X} \times \mathcal{Y} \to \mathbb{R}_+$ 
such that for all $\theta \in \Theta$ and $(x,y)$,
\begin{align*}
\sup_{\tilde{x} \in \text{supp}(\mathcal{A}(x))} 
\|\nabla_\theta \ell(y, f_\theta(\tilde{x}))\| &\leq G(x,y), \\
\sup_{\tilde{x} \in \text{supp}(\mathcal{A}(x))} 
\|\nabla^2_\theta \ell(y, f_\theta(\tilde{x}))\| &\leq H(x,y),
\end{align*}
with $\mathbb{E}_{P_0}[G(X,Y)^2] < \infty$ and $\mathbb{E}_{P_0}[H(X,Y)] < \infty$.
\end{enumerate}
\end{assumption}

Intuitive explanations and practical guidance for Assumption~\ref{ass:aug} are provided in Section 2.5. Condition (i) ensures that the features of interest $X_1$ are preserved during augmentation, so that the augmented data retain the $X_1$--$Y$ relationship that defines the target. Condition (ii) ensures that the expected gradient is smooth, which is needed for optimization and asymptotic analysis. Condition (iii) ensures that the augmentation does not introduce unbounded variability, which is necessary for establishing consistency and asymptotic normality.

\begin{remark}[On $\theta$-dependence and smoothness]
In the one-shot estimator studied in Sections~\ref{sec:method}--\ref{sec:efficiency}, the
augmentation distribution $\mathcal{A}$ is fixed and does not depend on $\theta$: the common
mechanisms (permutation of nuisance attributes, background swapping in images, conditional
generation, or masking) are all constructed independently of the parameter being estimated.
The adaptive and iterative schemes in which the augmentation evolves with the current
estimator are treated separately in Section~\ref{sec:iterative}.

Assumption~\ref{ass:aug}(ii) requires continuous differentiability of the composite
map $f_\theta$ under the augmentation. While Assumption~\ref{ass:loss}
ensures $\ell(y, f_\theta(x))$ is twice differentiable in $\theta$ for fixed $(y,x)$,
we additionally require that differentiation can be exchanged with the expectation
over the augmentation distribution $\mathcal{A}$. This is guaranteed by
the dominated convergence conditions in Assumption~\ref{ass:aug}(iii), which are
essential for establishing the asymptotic normality and variance estimation results.
\end{remark}
\begin{assumption}[Compactness]\label{ass:compact}
The parameter space $\Theta$ is compact, and the support of $\mathcal{A}(x)$ 
is contained in a $P_0$-measurable set for all $\theta \in \Theta$ and $x \in \mathcal{X}$.
\end{assumption}

\begin{assumption}[Identifiability]\label{ass:ident}
The population risk $R(\theta)$ has a unique minimizer $\theta_0 \in \mathrm{int}(\Theta)$ that is
\emph{well separated}: for every $\delta>0$,
$\inf_{\theta:\|\theta-\theta_0\|\ge\delta} R(\theta) > R(\theta_0)$. Since $\theta_0$ is an interior
minimizer and $R$ is differentiable (Assumptions~\ref{ass:loss}--\ref{ass:aug}), it follows that
$\Psi(\theta_0)=\nabla_\theta R(\theta_0)=0$, and $\theta_0$ is the unique stationary point in a
neighborhood of itself.
\end{assumption}

\begin{assumption}[Non-Singularity]\label{ass:nonsing}
The Jacobian matrix $A_0 = \nabla_\theta \Psi(\theta)\vert_{\theta=\theta_0}$ exists and is nonsingular.
\end{assumption}

These assumptions are mild generalizations of those used in standard M-estimation. The main novelty is Assumption~\ref{ass:aug}, which ensures that the augmentation mechanism preserves the features of interest while attenuating spurious associations.

\noindent\textbf{Scope and nonconvex estimators.}
The asymptotic results in this section are stated for estimators that solve 
the estimating equation $\Psi_n(\theta)=0$ under Assumptions~\ref{ass:loss}--\ref{ass:nonsing}. 
These conditions are standard in parametric and smooth semiparametric models. 
For nonconvex estimators such as deep neural networks, global minimizers may 
be difficult to characterize; however, the results continue to apply to any 
sequence of estimators that converges to a local minimizer satisfying the 
stated regularity conditions. The compactness assumption (Assumption~\ref{ass:compact}) 
can be relaxed to local compactness in a neighborhood of $\theta_0$ for such applications. 
Extensions to overparameterized and nonconvex settings, where the effective parameter 
space depends on $n$, require additional analysis beyond the scope of this work.

\subsection{Consistency}

\begin{theorem}[Consistency]\label{thm:consistency}
Under Assumptions \ref{ass:loss}--\ref{ass:nonsing}, the SCDA estimator $\hat\theta_{\mathrm{SCDA}}$ converges in probability to $\theta_0$, i.e.
\[
\hat\theta_{\mathrm{SCDA}} \xrightarrow{p} \theta_0, \qquad n \to \infty.
\]
\end{theorem}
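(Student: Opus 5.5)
The plan is to apply the standard argmin consistency theorem for M-estimators (van der Vaart, Theorem 5.7). That theorem needs two ingredients: a uniform law of large numbers $\sup_{\theta\in\Theta}|R_n(\theta)-R(\theta)|\xrightarrow{p}0$, and the well-separated minimum that Assumption~\ref{ass:ident} already provides. I will treat $\hat\theta_{\mathrm{SCDA}}$ as an exact (or near, $R_n(\hat\theta)\le \inf_\theta R_n(\theta)+o_p(1)$) minimizer of $R_n$. Given these two facts the conclusion follows from the usual argument. For any $\delta>0$, let $\eta=\inf_{\|\theta-\theta_0\|\ge\delta}R(\theta)-R(\theta_0)>0$. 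On the event $\|\hat\theta-\theta_0\|\ge\delta$ we have
\[
\eta\le R(\hat\theta)-R(\theta_0)\le \bigl[R(\hat\theta)-R_n(\hat\theta)\bigr]+\bigl[R_n(\theta_0)-R(\theta_0)\bigr]+o_p(1)\le 2\sup_\theta|R_n-R|+o_p(1),
\]
and the right-hand side tends to $0$ in probability, so this event has vanishing probability.

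The uniform LLN comes from a Glivenko--Cantelli argument applied to the class $\{m_\theta:\theta\in\Theta\}$, where
\[
m_\theta(x,y)=\mathbb{E}_{\tilde X\sim\mathcal{A}(x)}\ell(y,f_\theta(\tilde X)).
\]
Because $\mathcal{A}$ is $\theta$-free in the one-shot setting, $R_n(\theta)=\mathbb{P}_n m_\theta$ and $R(\theta)=P_0 m_\theta$, so this is an ordinary empirical process over an i.i.d.\ sample. I will show the class is Lipschitz in the parameter. By the mean value theorem and Assumption~\ref{ass:aug}(iii),
\[
|m_{\theta}(x,y)-m_{\theta'}(x,y)|\le \mathbb{E}_{\tilde X\sim\mathcal{A}(x)}\sup_{\vartheta}\|\nabla_\theta\ell(y,f_\vartheta(\tilde X))\|\,\|\theta-\theta'\|\le G(x,y)\|\theta-\theta'\|,
\]
with $\mathbb{E}G<\infty$ since $\mathbb{E}G^2<\infty$. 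Here the inner expectation of the sup is bounded by $G$ because $G$ bounds the gradient over the whole support of $\mathcal{A}(x)$. Compactness of $\Theta\subset\mathbb{R}^p$ (Assumption~\ref{ass:compact}) then gives finite $L_1(P_0)$ bracketing numbers: brackets of size $2\epsilon\,\mathbb{E}G$ are built from an $\epsilon$-net of $\Theta$ (van der Vaart, Example 19.7). The GC theorem (Theorem 19.4) therefore applies once the class has an integrable envelope.

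The envelope is the step I expect to be the main obstacle, because the assumptions bound gradients and Hessians but not the loss itself. I will fix a point $\theta^*\in\Theta$ and use $|m_\theta|\le |m_{\theta^*}|+G\,\mathrm{diam}(\Theta)$, which reduces the problem to integrability of $m_{\theta^*}$ at one point. That integrability is implicit in $R(\theta)$ being finite, which is required for Assumption~\ref{ass:ident} to make sense; I will state it explicitly as the finiteness of $R$. Measurability of $m_\theta$ follows from Assumption~\ref{ass:aug} together with measurability of the loss (Assumption~\ref{ass:loss}), and continuity of $\theta\mapsto m_\theta$ makes the supremum over $\Theta$ equal to a supremum over a countable dense subset, so no outer-probability issues arise. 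An alternative route that avoids the envelope entirely is to work with $\Psi_n$ and apply van der Vaart's Theorem 5.9 to the bounded-gradient class $G$. However, that route requires well-separatedness of the zero of $\Psi$, which Assumption~\ref{ass:ident} guarantees only locally, so I prefer the risk-based route.
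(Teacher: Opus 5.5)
Your proposal is correct and follows the same route as the paper's proof: a uniform law of large numbers for the class $\{m_\theta\}$ of augmentation-averaged losses, followed by the standard argmin-consistency argument based on the well-separated minimum in Assumption~\ref{ass:ident}. Your Lipschitz-bracketing construction, and your reduction of the envelope to integrability of $m_{\theta^*}$ at a single point, spell out what the paper's sketch leaves implicit; one small caveat is that the mean-value step tacitly needs $\Theta$ to be convex, or the bound $G$ to hold on a convex neighborhood of $\Theta$.
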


\begin{proof}[Sketch of proof]
By Assumptions \ref{ass:loss} and \ref{ass:aug}, the class of functions 
\[
(x,y) \mapsto \mathbb{E}_{\tilde X \sim \mathcal{A}(x)} \ell(y,f_\theta(\tilde X))
\]
satisfies a uniform law of large numbers. Consequently, $R_n(\theta)$ converges uniformly to $R(\theta)$, and minimizers of $R_n(\theta)$ converge to the unique minimizer of $R(\theta)$, namely $\theta_0$. The complete proof is given in Supplementary Section~\ref{S-pf:consistency}.
\end{proof}

\subsection{Asymptotic Normality}
Before stating the theorem, we define the Jacobian matrix $A_0 = \nabla_\theta \Psi(\theta)|_{\theta=\theta_0}$, where $\Psi(\theta)$ is the population estimating function defined in Section~\ref{sec:theory}.

\begin{theorem}[Asymptotic Normality of SCDA]\label{thm:clt}
Under Assumptions \ref{ass:loss}--\ref{ass:nonsing}, 
\[
\sqrt{n}\,(\hat\theta_{\mathrm{SCDA}}-\theta_0) \;\overset{d}{\longrightarrow}\; \mathcal{N}(0,\Sigma),
\]
where the asymptotic covariance has the sandwich form
\[
\Sigma = A_0^{-1} B_0 (A_0^{-1})^\top, \qquad
B_0 = \operatorname{Var}\!\left( \mathbb{E}_{\tilde X \sim \mathcal{A}(X)} \nabla_\theta \ell(Y,f_{\theta_0}(\tilde X)) \right),
\]
and $A_0 = \nabla_\theta \Psi(\theta)|_{\theta=\theta_0}$ is the Jacobian matrix defined above.
\end{theorem}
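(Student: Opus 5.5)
The plan is the classical Z-estimator argument applied to the marginalized score
\[
\psi_\theta(x,y) = \mathbb{E}_{\tilde X\sim\mathcal{A}(x)}\nabla_\theta \ell(y,f_\theta(\tilde X)).
\]
Since $\mathcal{A}$ is $\theta$-free, $\psi_\theta$ is an ordinary measurable function of $(x,y)$. Hence $\Psi_n(\theta)=\mathbb{P}_n\psi_\theta$ and $\Psi(\theta)=P_0\psi_\theta$ are an i.i.d. empirical average and its mean, and the nested structure disappears once we condition on nothing but the data.

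\textbf{Step 1: stationarity and consistency.} By Theorem~\ref{thm:consistency}, $\hat\theta_{\mathrm{SCDA}}\to_p\theta_0$. Since $\theta_0\in\mathrm{int}(\Theta)$ (Assumption~\ref{ass:ident}), with probability tending to one $\hat\theta$ lies in the interior. There it is a stationary point of the differentiable $R_n$, so $\Psi_n(\hat\theta)=0$ w.p.\ $\to1$. Differentiation under the inner integral is justified by dominated convergence using the envelope $G$ of Assumption~\ref{ass:aug}(iii).

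\textbf{Step 2: mean-value expansion.} Expand componentwise:
\[
0=\Psi_n(\hat\theta)=\Psi_n(\theta_0)+\dot\Psi_n(\bar\theta)(\hat\theta-\theta_0),
\]
where row $j$ of $\dot\Psi_n(\bar\theta)$ is evaluated at some $\bar\theta_j$ on the segment between $\hat\theta$ and $\theta_0$. Here $\dot\Psi_n(\theta)=\mathbb{P}_n\dot\psi_\theta$ with $\dot\psi_\theta(x,y)=\mathbb{E}_{\tilde X\sim\mathcal{A}(x)}\nabla^2_\theta\ell(y,f_\theta(\tilde X))$. Interchanging derivative and inner expectation again uses Assumption~\ref{ass:aug}(ii)--(iii), now via the envelope $H$.

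\textbf{Step 3: Jacobian convergence — the main obstacle.} We need $\dot\Psi_n(\bar\theta)\to_p A_0$. The function $\theta\mapsto\dot\psi_\theta(x,y)$ is continuous (Assumption~\ref{ass:aug}(ii)) and dominated by the integrable $H(x,y)$, and $\Theta$ is compact (Assumption~\ref{ass:compact}). A uniform law of large numbers over a compact neighborhood of $\theta_0$ then follows, either from the standard Jennrich/Tauchen ULLN or by a bracketing argument: continuity plus a dominating function gives finite $L_1$ bracketing numbers. This yields $\sup_\theta\|\mathbb{P}_n\dot\psi_\theta-P_0\dot\psi_\theta\|\to_p0$. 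The limit $P_0\dot\psi_\theta$ is continuous in $\theta$ and equals $\nabla_\theta\Psi(\theta)$ by dominated convergence, so consistency of $\bar\theta_j$ gives convergence to $A_0$. Nonsingularity of $A_0$ (Assumption~\ref{ass:nonsing}) then makes $\dot\Psi_n(\bar\theta)$ invertible w.p.\ $\to1$, with inverse converging in probability to $A_0^{-1}$. The delicate point I would check is measurability of $\psi_\theta$ and of the supremum, which Assumption~\ref{ass:compact} is intended to supply; separability via continuity in $\theta$ on a countable dense set handles it.

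\textbf{Step 4: CLT and Slutsky.} Since $\Psi(\theta_0)=0$, the summands $\psi_{\theta_0}(X_i,Y_i)$ are i.i.d.\ with mean zero. By Jensen, $\|\psi_{\theta_0}\|\le G$, so they have finite second moment $\mathbb{E}G^2<\infty$. Their covariance is exactly $B_0=\operatorname{Var}(\mathbb{E}_{\tilde X\sim\mathcal{A}(X)}\nabla_\theta\ell(Y,f_{\theta_0}(\tilde X)))$. The multivariate Lindeberg--Lévy CLT gives $\sqrt n\,\Psi_n(\theta_0)\Rightarrow N(0,B_0)$. Combining with Step 2,
\[
\sqrt n(\hat\theta-\theta_0)=-\dot\Psi_n(\bar\theta)^{-1}\sqrt n\,\Psi_n(\theta_0)+o_p(1),
\]
and Slutsky with Step 3 yields $N(0,A_0^{-1}B_0A_0^{-\top})$.

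Two remarks on the form of the result. $B_0$ contains only the variance of the conditional mean: because the inner expectation is computed exactly, no augmentation-noise term $\mathbb{E}\operatorname{Var}(\cdot\mid X,Y)$ appears, which is the term that Section~\ref{sec:montecarlo} restores for finitely many draws. Also, $A_0$ is symmetric here as a Hessian of $R$, so $(A_0^{-1})^\top=A_0^{-1}$, although the sandwich form does not need this.
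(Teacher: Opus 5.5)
Your proposal is correct and follows essentially the same route as the paper: a Taylor (mean-value) expansion of $\Psi_n(\hat\theta_{\mathrm{SCDA}})=0$ around $\theta_0$, convergence of the empirical Jacobian to $A_0$, the CLT for the i.i.d.\ mean $\Psi_n(\theta_0)$ with covariance $B_0$, and Slutsky's lemma. Your extra details fill in the steps the paper's sketch leaves implicit. These are interior stationarity with probability tending to one, a uniform law of large numbers for the Hessian via the envelope $H$, and the Jensen bound $\|\psi_{\theta_0}\|\le G$.
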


The covariance structure differs fundamentally from that of ERM and IPW: in SCDA the variance contribution comes from \emph{expectations over counterfactuals}, which attenuates variability driven by spurious features.

\begin{proof}[Sketch of proof]
The estimating equation $\Psi_n(\theta)=0$ defines an M-estimator with score contributions 
\[
\psi(Z_i;\theta)=\mathbb{E}_{\tilde X \sim \mathcal{A}(X_i)} \nabla_\theta \ell(Y_i,f_\theta(\tilde X)), 
\quad Z_i=(X_i,Y_i).
\] 
By Assumptions \ref{ass:loss}--\ref{ass:nonsing}, a Taylor expansion of $\Psi_n(\hat\theta_{\mathrm{SCDA}})$ around $\theta_0$ yields
\[
0=\Psi_n(\hat\theta_{\mathrm{SCDA}}) = \Psi_n(\theta_0) + A_0(\hat\theta_{\mathrm{SCDA}}-\theta_0)+o_p(\|\hat\theta_{\mathrm{SCDA}}-\theta_0\|).
\]
The central limit theorem applies to the i.i.d. mean $\Psi_n(\theta_0)$ with variance $B_0$. Rearranging establishes the stated distributional limit. The complete proof is given in Supplementary Section~\ref{S-pf:clt}.
\end{proof}
\subsection{Robustness to Imperfect Feature Preservation}

In practice, augmentation procedures may occasionally perturb components of the feature of interest $X_1$ due to imperfect segmentation, generator error, or approximation. The next proposition quantifies the effect of such rare perturbations on the SCDA estimator.

\begin{assumption}[Approximate Preservation]\label{ass:approx_preserve}
The augmentation mechanism $\mathcal{A}$ satisfies the following: For each draw $\tilde{X} \sim \mathcal{A}(X)$, there exists a (possibly $n$-dependent) probability $\epsilon_n \in [0,1)$ such that
\begin{enumerate}[(i)]
\item With probability $1-\epsilon_n$, the augmentation perfectly preserves the feature of interest $X_1$, i.e., $\tilde{X}_1 = X_1$ almost surely.
\item With probability $\epsilon_n$, the augmentation may perturb $X_1$ (resulting in $\tilde{X}_1 \neq X_1$), but the perturbed covariate $\tilde{X} = (\tilde{X}_1, \tilde{X}_2)$ remains in the support of the original distribution. Moreover, when conditioning on the event that $\tilde{X}_1 \neq X_1$, the following bound holds:
\[
\mathbb{E}_{\tilde{X} \sim \mathcal{A}(X)}\left[\|\nabla_\theta \ell(Y, f_\theta(\tilde{X})) - \nabla_\theta \ell(Y, f_\theta(X_1, \tilde{X}_2))\|^2 \,\Big|\, \tilde{X}_1 \neq X_1\right] \leq M^2,
\]
where $X_1$ on the right-hand side refers to the original (unperturbed) feature of interest, $(X_1, \tilde{X}_2)$ denotes the covariate with original $X_1$ and perturbed $X_2$, and $M < \infty$ is a constant that does not depend on $n$.
\end{enumerate}
We assume $\epsilon_n \to 0$ as $n \to \infty$.
\end{assumption}
\begin{proposition}[Robustness to Rare Perturbations of $X_1$]\label{prop:approx_preserve}
Suppose Assumptions \ref{ass:loss}--\ref{ass:nonsing} and \ref{ass:approx_preserve} hold, with
perturbation probability $\epsilon_n = o(1)$ and bound $M<\infty$ as defined there. Then:
\begin{enumerate}[(a)]
\item The population target under the imperfect mechanism, $\theta_{\epsilon_n}$
(the solution of the perturbed estimating equation), satisfies
$\|\theta_{\epsilon_n} - \theta_0\| = O(\epsilon_n)$.
\item If $n\epsilon_n^2 \to 0$, then 
$$\sqrt{n}(\hat{\theta}_{\mathrm{SCDA}} - \theta_0) \overset{d}{\longrightarrow} 
\mathcal{N}(0, \Sigma).$$
\end{enumerate}
\end{proposition}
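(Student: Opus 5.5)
Write $\Psi_\epsilon(\theta)$ for the population estimating function under the imperfect mechanism and $\Psi(\theta)$ for the one under the ideal mechanism $(X_1,\tilde X_2)$ of Assumption~\ref{ass:aug}. The whole argument rests on a bias bound between these two, followed by the implicit function theorem for part (a) and the Taylor argument of Theorem~\ref{thm:clt} for part (b).

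\textbf{Step 1: bias bound.} Couple each imperfect draw with the ideal draw $(X_1,\tilde X_2)$, using the same $\tilde X_2$. The two gradients differ only on the event $\{\tilde X_1\neq X_1\}$, which has probability $\epsilon_n$. By Cauchy--Schwarz and the conditional second-moment bound in Assumption~\ref{ass:approx_preserve}(ii),
\[
\sup_{\theta\in\Theta}\|\Psi_\epsilon(\theta)-\Psi(\theta)\|\le \epsilon_n\,\mathbb{E}\bigl[\|\Delta\|\,\big|\,\tilde X_1\neq X_1\bigr]\le \epsilon_n M ,
\]
where $\Delta$ is the gradient difference. The same argument bounds the per-observation score difference $\psi_\epsilon(Z;\theta)-\psi(Z;\theta)$ in $L^2(P_0)$ by $\sqrt{\epsilon_n}\,M$. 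The $\sqrt{\epsilon_n}$ appears because Jensen is applied inside the conditional expectation over the augmentation draw.

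\textbf{Step 2: part (a).} By Assumption~\ref{ass:aug}(iii), the Hessians are dominated by $H$, and this domination still applies on the perturbed event because perturbed covariates stay in the support. So $\Psi_\epsilon$ is $C^1$ and its Jacobian is within $o(1)$ of $\Psi$'s Jacobian near $\theta_0$. I assume here that $H$ bounds the Hessian on the original support as well; otherwise I would add this as a mild condition. Consistency of $\theta_{\epsilon_n}$ follows from Step~1 together with well-separatedness (Assumption~\ref{ass:ident}), by the same argmin argument as Theorem~\ref{thm:consistency}. Then the mean-value expansion
\[
0=\Psi_\epsilon(\theta_{\epsilon_n})=\Psi(\theta_0)+[\Psi_\epsilon-\Psi](\theta_0)+\bar A(\theta_{\epsilon_n}-\theta_0),
\]
with $\bar A\to A_0$ nonsingular (Assumption~\ref{ass:nonsing}) and $\Psi(\theta_0)=0$, gives $\|\theta_{\epsilon_n}-\theta_0\|\le \|\bar A^{-1}\|\,\epsilon_n M=O(\epsilon_n)$.

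\textbf{Step 3: part (b).} Apply Theorem~\ref{thm:clt} to the imperfect mechanism with target $\theta_{\epsilon_n}$. Since the data-generating law is fixed but the score now depends on $n$, this requires a triangular-array version: the Lindeberg CLT for $n^{-1/2}\sum_i \psi_\epsilon(Z_i;\theta_{\epsilon_n})$, with uniform integrability supplied by the envelope $G^2$. From Step~1, $\mathrm{Var}(\psi_\epsilon)-\mathrm{Var}(\psi)\to0$, so the limiting variance is still $B_0$; likewise the Jacobian converges to $A_0$. This yields
\[
\sqrt n(\hat\theta_{\mathrm{SCDA}}-\theta_{\epsilon_n})\overset{d}{\to}\mathcal N(0,\Sigma).
\]
Finally, write
\[
\sqrt n(\hat\theta_{\mathrm{SCDA}}-\theta_0)=\sqrt n(\hat\theta_{\mathrm{SCDA}}-\theta_{\epsilon_n})+\sqrt n\,O(\epsilon_n).
\]
The last term vanishes because $n\epsilon_n^2\to0$, and Slutsky's lemma concludes.

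\textbf{Main obstacle.} The hardest part is the triangular-array step: controlling the empirical-process remainder uniformly in $n$ as both the score and the target drift with $n$. I would handle it by writing $\Psi_{n,\epsilon}=\Psi_n+(\Psi_{n,\epsilon}-\Psi_n)$. The correction term has mean $O(\epsilon_n)$, and its centered part has variance $O(\epsilon_n M^2/n)$, so $\sqrt n$ times the correction is $o_p(1)+O(\sqrt n\,\epsilon_n)=o_p(1)$. This reduces part (b) directly to the unperturbed expansion of Theorem~\ref{thm:clt}.
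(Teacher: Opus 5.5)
Your proposal is correct and follows essentially the same route as the paper. The structure is an $\epsilon_n M$ bound on the score bias via Cauchy--Schwarz under Assumption~\ref{ass:approx_preserve}, then a mean-value (implicit-function) step using nonsingularity of $A_0$ to obtain $\|\theta_{\epsilon_n}-\theta_0\|=O(\epsilon_n)$, and finally the split $\sqrt n(\hat\theta_{\mathrm{SCDA}}-\theta_0)=\sqrt n(\hat\theta_{\mathrm{SCDA}}-\theta_{\epsilon_n})+\sqrt n(\theta_{\epsilon_n}-\theta_0)$ with $\sqrt n\,\epsilon_n\to0$ and Slutsky. One small repair: well-separatedness is a property of $R$, whereas your Step~1 only controls $\Psi_\epsilon-\Psi$, so the argmin argument does not directly apply; instead, define $\theta_{\epsilon_n}$ as the unique root of $\Psi_\epsilon$ near $\theta_0$ given by the inverse function theorem, which your uniform $C^1$-closeness of $\Psi_\epsilon$ to $\Psi$ near $\theta_0$ already supplies.
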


\begin{proof}
See Supplementary Section~\ref{S-pf:robust}.
\end{proof}
\begin{remark}
The condition $n\epsilon_n^2 \to 0$ in part (b) is equivalent to $\epsilon_n = o(n^{-1/2})$ 
up to logarithmic factors. This condition ensures that the bias from perturbations of $X_1$ 
shrinks faster than the stochastic error $O_p(n^{-1/2})$, so the latter dominates in the 
asymptotic distribution. The bounded second moment condition in 
Assumption~\ref{ass:approx_preserve} ensures that occasional perturbations do not introduce 
heavy-tailed score contributions that would invalidate the CLT.
\end{remark}

\subsection{Variance Estimation}

Practical inference requires consistent estimation of $\Sigma = A_0^{-1}B_0(A_0^{-1})^\top$. 
When the inner expectation over $\mathcal{A}$ can be computed exactly or approximated 
via Monte Carlo (Section~\ref{sec:montecarlo}), standard plug-in estimators apply.

Recall that $A_0 = \nabla_\theta \Psi(\theta)|_{\theta=\theta_0}$ where 
$\Psi(\theta) = \mathbb{E}[\mathbb{E}_{\tilde{X} \sim \mathcal{A}(X)} 
\nabla_\theta \ell(Y, f_\theta(\tilde{X}))]$. Define the empirical Jacobian
\[
\hat{A} = \frac{1}{n}\sum_{i=1}^n \mathbb{E}_{\tilde{X} \sim \mathcal{A}(X_i)} 
\nabla^2_\theta \ell(Y_i, f_{\hat{\theta}}(\tilde{X})),
\]
and the empirical variance of scores
\[
\hat{B} = \frac{1}{n}\sum_{i=1}^n (\hat{\psi}_i - \bar{\psi}_n)(\hat{\psi}_i - \bar{\psi}_n)^\top,
\]
where 
\[
\hat{\psi}_i = \mathbb{E}_{\tilde{X} \sim \mathcal{A}(X_i)} 
\nabla_\theta \ell(Y_i, f_{\hat{\theta}}(\tilde{X})), \quad 
\bar{\psi}_n = \frac{1}{n}\sum_{i=1}^n \hat{\psi}_i.
\]

\begin{proposition}[Variance estimation consistency]\label{prop:var_est}
Under Assumptions \ref{ass:loss}--\ref{ass:nonsing}, $\hat{A} \xrightarrow{p} A_0$ 
and $\hat{B} \xrightarrow{p} B_0$. Consequently, 
$\hat{\Sigma} = \hat{A}^{-1}\hat{B}(\hat{A}^{-1})^\top \xrightarrow{p} \Sigma$.
\end{proposition}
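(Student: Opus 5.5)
The plan is the standard route for sandwich consistency: get uniform laws of large numbers for the Hessian and outer-product classes, then plug in the consistent estimator $\hat\theta=\hat\theta_{\mathrm{SCDA}}$ from Theorem~\ref{thm:consistency}. Write $\psi(z;\theta)=\mathbb{E}_{\tilde X\sim\mathcal{A}(x)}\nabla_\theta\ell(y,f_\theta(\tilde X))$ and $\dot\psi(z;\theta)=\mathbb{E}_{\tilde X\sim\mathcal{A}(x)}\nabla^2_\theta\ell(y,f_\theta(\tilde X))$.

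The first step is to show $\sup_{\theta\in\Theta}\|\frac1n\sum_i\dot\psi(Z_i;\theta)-\mathbb{E}\dot\psi(Z;\theta)\|\to 0$ in probability. By Assumption~\ref{ass:loss}, the map $\theta\mapsto\nabla^2_\theta\ell(y,f_\theta(\tilde x))$ is continuous. It is dominated by $H(x,y)$ uniformly over $\tilde x\in\mathrm{supp}(\mathcal{A}(x))$ (Assumption~\ref{ass:aug}(iii)). Dominated convergence inside the inner expectation then makes $\theta\mapsto\dot\psi(z;\theta)$ continuous, with envelope $H$, which is integrable. Since $\Theta$ is compact (Assumption~\ref{ass:compact}), the classical Jennrich/Tauchen uniform LLN applies: a continuous-in-parameter class with an integrable envelope on a compact set is Glivenko--Cantelli. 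The same domination justifies exchanging differentiation and expectation, so $\mathbb{E}\dot\psi(Z;\theta)=\nabla_\theta\Psi(\theta)$. We then write
\[
\hat A-A_0=\Big[\tfrac1n\textstyle\sum_i\dot\psi(Z_i;\hat\theta)-\nabla\Psi(\hat\theta)\Big]+\big[\nabla\Psi(\hat\theta)-\nabla\Psi(\theta_0)\big].
\]
The first bracket is bounded by the uniform deviation. The second bracket tends to zero by continuity of $\nabla\Psi$ together with $\hat\theta\to_p\theta_0$ and the continuous mapping theorem.

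For $\hat B$, expand $\hat B=\frac1n\sum_i\hat\psi_i\hat\psi_i^\top-\bar\psi_n\bar\psi_n^\top$ and treat the two terms separately.
\begin{itemize}
\item For $\bar\psi_n$, apply the same uniform LLN argument to the class $\psi(\cdot;\theta)$, which has envelope $G$ with $\mathbb{E}G\le(\mathbb{E}G^2)^{1/2}<\infty$. This gives $\bar\psi_n=\Psi(\hat\theta)+o_p(1)\to_p\Psi(\theta_0)=0$.
\item For the outer-product class $\psi(z;\theta)\psi(z;\theta)^\top$, the map is continuous in $\theta$ with envelope $G^2$, which is integrable by Assumption~\ref{ass:aug}(iii). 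A further uniform LLN plus continuity of $\theta\mapsto\mathbb{E}[\psi\psi^\top](\theta)$ (again by dominated convergence) gives $\frac1n\sum_i\hat\psi_i\hat\psi_i^\top\to_p\mathbb{E}[\psi(Z;\theta_0)\psi(Z;\theta_0)^\top]$.
\end{itemize}
Because $\mathbb{E}\psi(Z;\theta_0)=\Psi(\theta_0)=0$ by Assumption~\ref{ass:ident}, this limit equals $B_0$. Finally, $A_0$ is nonsingular (Assumption~\ref{ass:nonsing}) and matrix inversion is continuous at $A_0$. Hence $\hat A$ is invertible with probability tending to one, and the continuous mapping theorem yields $\hat\Sigma\to_p\Sigma$.

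The main obstacle is the uniform LLN step, specifically establishing continuity in $\theta$ of the \emph{integrated} functions $\dot\psi$ and $\psi\psi^\top$ with envelopes that do not depend on $\theta$. The stated envelopes are suprema over the augmentation support, so they hold uniformly in $\tilde x$ and $\theta$. I would invoke dominated convergence pointwise in $(x,y)$ for the inner integral, using continuity of $\nabla^2_\theta\ell$ in $\theta$ together with the domination by $H$ or $G$.

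A secondary subtlety arises when the inner expectation is only approximated by Monte Carlo. That case is deferred to Section~\ref{sec:montecarlo}. Here I would note that the proof treats $\mathbb{E}_{\tilde X\sim\mathcal{A}(X_i)}$ as computed exactly, as in the definitions of $\hat A$ and $\hat B$.
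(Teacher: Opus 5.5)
Your proposal is correct and takes essentially the same route as the paper's proof. Both use uniform laws of large numbers over the compact $\Theta$ for the Hessian and score outer-product classes, with envelopes $H$ and $G^2$ from Assumption~\ref{ass:aug}(iii). Both then plug in $\hat\theta_{\mathrm{SCDA}}$, consistent by Theorem~\ref{thm:consistency}, use continuity of the population limits together with $\Psi(\theta_0)=0$, and finish with the continuous mapping theorem for $\hat A^{-1}$.
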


\begin{proof}
See Supplementary Section~\ref{S-pf:var}.
\end{proof}
\begin{remark}
In practice, the Hessian matrices in $\hat{A}$ may be computed via automatic 
differentiation. Alternatively, when the Hessian is computationally prohibitive, 
practitioners may approximate $\hat{A}$ using the outer product of gradients or 
numerical finite differences. Numerical finite differencing of $\Psi_n$ yields a consistent estimate of $A_0$ under the same conditions. The outer-product-of-gradients approximation, by contrast, targets $B_0$ rather than $A_0$ and coincides with it only under a generalized information equality \citep{white1982maximum}, which need not hold for augmented risks; we therefore recommend automatic differentiation or finite differences for $\hat A$.
\end{remark}

\textbf{Monte Carlo implementation.} When expectations over $\mathcal{A}$ are 
approximated using $m$ draws per observation (Section~\ref{sec:montecarlo}), replace 
$\hat{\psi}_i$ with its Monte Carlo analogue $\hat{\psi}_{i,m} = \frac{1}{m}\sum_{j=1}^m 
\nabla_\theta \ell(Y_i, f_{\hat{\theta}}(\tilde{X}_{ij}))$. If $m$ is fixed, the 
variance estimator should include the additional $\hat{V}_{\text{MC}}/m$ term from 
Proposition~\ref{prop:mcvar}, as detailed in Section~\ref{sec:montecarlo}.

\subsection{Monte Carlo Approximation of the Inner Expectation}
\label{sec:montecarlo}

Since the inner expectation over $\tilde{X} \sim \mathcal{A}(X)$ is generally 
intractable, practical implementations approximate it via Monte Carlo sampling. 
For each observation $(X_i,Y_i)$, we generate $m$ independent augmented covariates 
$\tilde{X}_{i1},\ldots,\tilde{X}_{im} \stackrel{\text{iid}}{\sim} \mathcal{A}(X_i)$ 
and replace the exact expectation with an empirical average. This yields the 
Monte Carlo empirical risk
\[
R_{n,m}(\theta) = \frac{1}{n}\sum_{i=1}^n \frac{1}{m}\sum_{j=1}^m 
\ell(Y_i, f_\theta(\tilde{X}_{ij})),
\]
with corresponding estimator
\[
\hat{\theta}_{n,m} = \arg\min_{\theta \in \Theta} R_{n,m}(\theta).
\]

\begin{assumption}[Finite augmentation draws]\label{ass:mc}
For each $x \in \mathcal{X}$ and $\theta \in \Theta$, the augmented covariates 
$\tilde{X}_1,\ldots,\tilde{X}_m \sim \mathcal{A}(x)$ are i.i.d.\ given $X=x$. 
Moreover,
\[
\sup_{\theta \in \Theta}\,\mathbb{E}\,\|\nabla_\theta \ell(Y, f_\theta(\tilde{X}))\|^2 
< \infty
\]
uniformly over draws from $\mathcal{A}(X)$.
\end{assumption}

The key insight is that Monte Carlo approximation introduces an additional source 
of variability beyond sampling variability in $(X_i,Y_i)$. To quantify this effect, 
decompose the total variance into between-observation and within-observation components.

\begin{proposition}[Asymptotic distribution with finite augmentation]
\label{prop:mcvar}
Suppose Assumptions \ref{ass:loss}--\ref{ass:nonsing} and \ref{ass:mc} hold. 
Then the Monte Carlo SCDA estimator satisfies
\[
\sqrt{n}(\hat{\theta}_{n,m} - \theta_0) \;\overset{d}{\longrightarrow}\; 
\mathcal{N}(0, \Sigma_m),
\]
where
\[
\Sigma_m = A_0^{-1}\left(B_0 + \frac{1}{m}V_{\text{MC}}\right)(A_0^{-1})^\top,
\]
with $B_0$ as in Theorem~\ref{thm:clt} and
\[
V_{\text{MC}} = \mathbb{E}\left[\operatorname{Var}\left(\nabla_\theta 
\ell(Y, f_{\theta_0}(\tilde{X})) \,\big|\, X, Y\right)\right].
\]
\end{proposition}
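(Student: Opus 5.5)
The plan is to treat $\hat\theta_{n,m}$ as an ordinary M-estimator on the enlarged i.i.d.\ sample $W_i=(X_i,Y_i,\tilde X_{i1},\ldots,\tilde X_{im})$. Define the per-unit score
\[
\psi_m(W_i;\theta)=\frac{1}{m}\sum_{j=1}^m \nabla_\theta \ell(Y_i,f_\theta(\tilde X_{ij})).
\]
Because the draws are i.i.d.\ given $X_i$ (Assumption~\ref{ass:mc}), the $W_i$ are i.i.d.\ across $i$, and $\mathbb{E}[\psi_m(W;\theta)\mid X,Y]=\psi(Z;\theta)$. Hence the population estimating function is still $\Psi(\theta)$, with root $\theta_0$ and Jacobian $A_0$. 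The whole argument is then the proof of Theorem~\ref{thm:consistency} and Theorem~\ref{thm:clt} rerun with $Z_i$ replaced by $W_i$. Only the variance of the score changes.

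\textbf{Consistency.} The Monte Carlo risk $R_{n,m}$ is an i.i.d.\ average of $m^{-1}\sum_j \ell(Y_i,f_\theta(\tilde X_{ij}))$, whose mean is $R(\theta)$. The envelope bounds of Assumption~\ref{ass:aug}(iii) hold pointwise on $\operatorname{supp}\mathcal{A}(x)$, so they apply to every draw. Together with compactness (Assumption~\ref{ass:compact}), this gives a uniform LLN $\sup_\theta|R_{n,m}-R|\to_p 0$: use the mean-value bound $|\ell(\theta)-\ell(\theta')|\le G\|\theta-\theta'\|$ and a bracketing/Glivenko--Cantelli argument. Well-separatedness (Assumption~\ref{ass:ident}) then yields $\hat\theta_{n,m}\to_p\theta_0$. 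Since $\theta_0$ is interior, $\hat\theta_{n,m}$ solves $\Psi_{n,m}(\theta)=n^{-1}\sum_i\psi_m(W_i;\theta)=0$ with probability tending to one.

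\textbf{Expansion.} Taylor-expand the estimating equation:
\[
0=\Psi_{n,m}(\theta_0)+\hat A_{n,m}(\bar\theta)(\hat\theta_{n,m}-\theta_0).
\]
Here $\hat A_{n,m}$ averages Hessians at intermediate points. By the dominating function $H$ and a uniform LLN for the Hessians, $\hat A_{n,m}(\bar\theta)\to_p A_0$, which is nonsingular by Assumption~\ref{ass:nonsing}. The multivariate CLT applies to $\sqrt n\,\Psi_{n,m}(\theta_0)$: it is a mean-zero i.i.d.\ sum, with finite second moments by Assumption~\ref{ass:mc} or $G\in L^2$. Slutsky then gives asymptotic normality with covariance $A_0^{-1}\operatorname{Var}(\psi_m(W;\theta_0))A_0^{-\top}$.

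\textbf{Variance computation.} This is the only step that departs from Theorem~\ref{thm:clt}, and it is where care is needed. By the law of total variance conditioning on $(X,Y)$,
\[
\operatorname{Var}(\psi_m)=\operatorname{Var}\bigl(\mathbb{E}[\psi_m\mid X,Y]\bigr)+\mathbb{E}\bigl[\operatorname{Var}(\psi_m\mid X,Y)\bigr].
\]
The first term is $\operatorname{Var}(\psi(Z;\theta_0))=B_0$. For the second, use conditional independence of the $m$ draws given $X$. I also need them independent of $Y$ given $X$, which holds because $\mathcal{A}$ depends only on $x$. Then $\operatorname{Var}(\psi_m\mid X,Y)=m^{-1}\operatorname{Var}(\nabla_\theta\ell(Y,f_{\theta_0}(\tilde X))\mid X,Y)$, whose expectation is $V_{\mathrm{MC}}/m$. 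The subtle point I would check explicitly is exactly this: the draws must be generated conditionally on $X$ alone, with no dependence on $Y$ or on other units. Otherwise cross terms appear. For permutation-based augmentation across units, for example, the $W_i$ are not exactly independent and an extra U-statistic argument would be needed. I would note that the statement is for draws from the fixed kernel $\mathcal{A}(X_i)$, as in Assumption~\ref{ass:mc}.
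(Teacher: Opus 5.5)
Your proposal is correct and follows essentially the same route as the paper. You treat $\hat\theta_{n,m}$ as an M-estimator with simulated score $\psi_m$, whose conditional mean given $(X,Y)$ is the exact score (so $\theta_0$ and $A_0$ are unchanged); you Taylor-expand as in Theorem~\ref{thm:clt}; and you split $\operatorname{Var}(\psi_m)$ by total variance into $B_0$ plus $V_{\mathrm{MC}}/m$, using conditional independence of the $m$ draws. Your caveat that cross-unit schemes such as empirical permutation break the i.i.d.\ structure of the $W_i$ is well taken: the statement avoids it only because Assumption~\ref{ass:mc} posits draws from the fixed kernel $\mathcal{A}(X_i)$.
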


\begin{proof}
See Supplementary Section~\ref{S-pf:mc}.
\end{proof}
\textbf{Implications for practice.} Proposition~\ref{prop:mcvar} has three important 
consequences:

\begin{enumerate}
\item If $m \to \infty$ as $n \to \infty$, the Monte Carlo error vanishes asymptotically 
and $\hat{\theta}_{n,m}$ achieves the same limiting distribution as the exact SCDA 
estimator in Theorem~\ref{thm:clt}.

\item If $m$ is held fixed (common in deep learning), the estimator remains consistent 
and asymptotically normal but with inflated variance. The inflation term
$A_0^{-1}V_{\text{MC}}A_0^{-\top}/m$ (in the scalar case, a factor $1 + V_{\text{MC}}/(m B_0)$)
quantifies the efficiency loss from finite augmentation.

\item For valid inference with fixed $m$, practitioners must account for the 
$V_{\text{MC}}/m$ term in variance estimation. This is done by using the variance estimator $\hat{\Sigma}_m$ defined in equation~\eqref{eq:sigma_m} below, which includes the $\hat{V}_{\text{MC}}/m$ term. Practitioners should use $\hat{\Sigma}_m$ (not $\hat{\Sigma}$) when $m$ is finite.
\end{enumerate}

\textbf{Variance estimation with Monte Carlo approximation.} 
The asymptotic covariance $\Sigma_m$ can be consistently estimated via
\begin{equation}\label{eq:sigma_m}
\hat{\Sigma}_m = \hat{A}^{-1}\left(\hat{B} + \frac{1}{m}\hat{V}_{\text{MC}}\right)
(\hat{A}^{-1})^\top,
\end{equation}
where $\hat{A}$ is as in Section~\ref{sec:theory} and $\hat{B}$, $\hat{V}_{\text{MC}}$ are the
empirical between- and within-observation score covariances; explicit formulas and practical
guidance on choosing $m$ (based on the ratio $\hat{V}_{\text{MC}}/(m\hat{B})$) are given in
Supplementary Section~\ref{S-sec:practical_supp}. The correction matters only for finite $m$ and for
functionals loading on the augmented feature; for the target parameter it is asymptotically
negligible (Theorem~\ref{thm:oracle}).

\section{Efficiency and Robustness}\label{sec:efficiency}
\subsection{Efficiency Relative to Empirical Risk Minimization}

The asymptotic covariance of the SCDA estimator in Theorem~\ref{thm:clt} is 
\[
\Sigma_{\mathrm{SCDA}} = A_0^{-1} B_0 (A_0^{-1})^\top,
\]
where 
\[
B_0 = \operatorname{Var}\!\left( \mathbb{E}_{\tilde X \sim \mathcal{A}(X)} 
\nabla_\theta \ell(Y,f_{\theta_0}(\tilde X)) \right).
\]
To interpret this expression, it is instructive to compare with the empirical risk minimizer (ERM), which solves
\[
\Psi^{\mathrm{ERM}}_n(\theta) = \frac{1}{n}\sum_{i=1}^n \nabla_\theta \ell(Y_i,f_\theta(X_i))=0,
\]
with population score $\Psi^{\mathrm{ERM}}(\theta) = \mathbb{E}\{\nabla_\theta \ell(Y,f_\theta(X))\}$.  
The ERM estimator is asymptotically normal with covariance
\[
\Sigma_{\mathrm{ERM}} = (A^{\mathrm{ERM}}_0)^{-1} B^{\mathrm{ERM}}_0 (A^{\mathrm{ERM}}_0)^{-\top},
\]
where $B^{\mathrm{ERM}}_0 = \operatorname{Var}\!\big(\nabla_\theta \ell(Y,f_{\theta_0}(X))\big)$.  

A Loewner comparison of $\Sigma_{\mathrm{SCDA}}$ and $\Sigma_{\mathrm{ERM}}$ is not the
right one: ERM is biased for the target defined below, so a variance ordering conflates bias and
variance. The proper comparison is in terms of \emph{risk under the target distribution}, which we
now make precise.

Decompose $X=(X_1,X_2)$ into features of primary interest and nuisance features.
The augmentation \emph{defines the estimand}. Let $P_{\mathcal{A}}$ denote the joint law of
$(\tilde X, Y)$ when $(X,Y)\sim P_0$ and $\tilde X \sim \mathcal{A}(X)$; for permutation augmentation,
which draws $\tilde X_2$ from its marginal independently of $(X_1,Y)$ while preserving $X_1$,
$P_{\mathcal{A}} = P_{X_1,Y}\otimes P_{X_2}$.

\begin{lemma}[Augmented risk is target risk]\label{lem:identity}
For every $\theta\in\Theta$, $R(\theta) = \mathbb{E}_{P_{\mathcal{A}}}[\ell(Y,f_\theta(X))]$.
Consequently $\theta_0 = \arg\min_{\theta\in\Theta}\mathbb{E}_{P_{\mathcal{A}}}[\ell(Y,f_\theta(X))]$;
that is, the SCDA estimand is the risk minimizer under the nuisance-marginalized target
$P_{\mathcal{A}}$.
\end{lemma}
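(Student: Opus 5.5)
The identity is essentially a change of variables (the law of iterated expectations) applied to the nested expectation that defines $R(\theta)$. I would first formalize $P_{\mathcal{A}}$ as a measure on $\mathcal{X}\times\mathcal{Y}$. Regard $\mathcal{A}$ as a Markov kernel from $\mathcal{X}$ to $\mathcal{X}$; measurability of $x\mapsto\mathcal{A}(x)(B)$ is implicit in Assumptions~\ref{ass:aug} and \ref{ass:compact}. Then define, for measurable rectangles $B\times C$,
\[
P_{\mathcal{A}}(B\times C)=\int \mathbf{1}\{y\in C\}\,\mathcal{A}(x)(B)\,dP_0(x,y).
\]
This extends uniquely to a probability measure by the Ionescu--Tulcea/Carathéodory construction. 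Equivalently, form the joint law $Q$ of $(X,Y,\tilde X)$ on $\mathcal{X}\times\mathcal{Y}\times\mathcal{X}$ as $P_0\otimes\mathcal{A}$, and let $P_{\mathcal{A}}$ be its image under the projection $(x,y,\tilde x)\mapsto(\tilde x,y)$. This is exactly the paper's description: ``the joint law of $(\tilde X,Y)$.''

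The main step comes next. Fix $\theta$ and set $h_\theta(\tilde x,y)=\ell(y,f_\theta(\tilde x))$, which is measurable by Assumption~\ref{ass:loss}. The argument then has three moves:
\begin{enumerate}[(i)]
\item By the construction of $Q$ through the kernel, Fubini/Tonelli for kernels gives $\mathbb{E}_{(X,Y)\sim P_0}\,\mathbb{E}_{\tilde X\sim\mathcal{A}(X)}[h_\theta(\tilde X,Y)]=\mathbb{E}_Q[h_\theta(\tilde X,Y)]$.
\item The change-of-variables formula for image measures gives $\mathbb{E}_Q[h_\theta(\tilde X,Y)]=\mathbb{E}_{P_{\mathcal{A}}}[h_\theta(X,Y)]$, where $(X,Y)$ now denotes a generic draw from $P_{\mathcal{A}}$.
\item Chaining the two equalities yields $R(\theta)=\mathbb{E}_{P_{\mathcal{A}}}[\ell(Y,f_\theta(X))]$.
\end{enumerate}

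The only genuine obstacle is justifying the exchange in step (i) when $\ell$ is not sign-definite or its integrability is not given directly. The plan for this is as follows.
\begin{itemize}
\item For losses bounded below (squared, logistic), apply Tonelli to $h_\theta$ minus its lower bound. The equality then holds in $[0,\infty]$ with no integrability needed, and it holds for every $\theta$.
\item Otherwise, use integrability. Assumption~\ref{ass:aug}(iii) bounds the gradient on the support of $\mathcal{A}(x)$, and by the mean value theorem this gives $|h_\theta(\tilde x,y)|\le|h_{\theta_*}(\tilde x,y)|+\mathrm{diam}(\Theta)\,G(x,y)$ on compact $\Theta$ (Assumption~\ref{ass:compact}). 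So integrability at one reference $\theta_*$, which is implicit in $R$ being finite, transfers to all $\theta$, and Fubini applies.
\end{itemize}

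The consequence follows immediately. The two objective functions agree identically on $\Theta$, so they share minimizers, and Assumption~\ref{ass:ident} makes the minimizer unique and equal to $\theta_0$. For the permutation example I would also check the product form. Under permutation augmentation, $\mathcal{A}(x)=\delta_{x_1}\otimes P_{X_2}$. Evaluating the defining formula on rectangles $B_1\times B_2\times C$ gives $P_0(X_1\in B_1,Y\in C)\,P_{X_2}(B_2)$, which identifies $P_{\mathcal{A}}=P_{X_1,Y}\otimes P_{X_2}$ by uniqueness of extension from the $\pi$-system of rectangles.
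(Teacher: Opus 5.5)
Your proposal is correct and follows essentially the same route as the paper. The identity is the law of iterated expectations applied to the joint law of $(X,Y,\tilde X)$ built from $P_0$ and the kernel $\mathcal{A}$, with $P_{\mathcal{A}}$ its $(\tilde X,Y)$-marginal; $R$ and the $P_{\mathcal{A}}$-risk then coincide as functions of $\theta$, so they share the unique minimizer $\theta_0$ of Assumption~\ref{ass:ident}. Your extra measure-theoretic care (the kernel construction, Tonelli for bounded-below losses, the rectangle check of the permutation product form) goes beyond what the paper needs. The only loose point is the fallback mean-value bound for losses unbounded below, which tacitly assumes the segment from $\theta_*$ to $\theta$ lies in $\Theta$; compactness alone does not guarantee that.
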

\begin{proof}
See Supplementary Section~\ref{S-pf:eff}.
\end{proof}

The efficiency comparison is therefore in terms of excess risk under $P_{\mathcal{A}}$,
\[
\mathcal{E}(\theta) \;=\; \mathbb{E}_{P_{\mathcal{A}}}[\ell(Y,f_\theta(X))]
\;-\; \min_{\theta'\in\Theta}\mathbb{E}_{P_{\mathcal{A}}}[\ell(Y,f_{\theta'}(X))] \;\ge\; 0.
\]

\begin{theorem}[Excess risk relative to ERM]\label{thm:efficiency}
Suppose Assumptions~\ref{ass:loss}--\ref{ass:nonsing} hold and
$H = \nabla^2_\theta \mathbb{E}_{P_{\mathcal{A}}}[\ell(Y,f_\theta(X))]\big|_{\theta_0}\succ 0$. Then:
\begin{enumerate}[(a)]
\item $n\,\mathcal{E}(\hat\theta_{\mathrm{SCDA}}) \xrightarrow{d} \tfrac12 W^\top H W$ with
$W\sim\mathcal{N}(0,\Sigma)$, so $\mathcal{E}(\hat\theta_{\mathrm{SCDA}}) = O_p(n^{-1})$;
\item if $\theta_{\mathrm{ERM}}\neq\theta_0$, then
$\mathcal{E}(\hat\theta_{\mathrm{ERM}}) \xrightarrow{p} \Delta > 0$, a constant independent of $n$;
\item consequently
$\mathbb{P}\big(\mathcal{E}(\hat\theta_{\mathrm{SCDA}}) < \mathcal{E}(\hat\theta_{\mathrm{ERM}})\big)\to 1$.
\end{enumerate}
\end{theorem}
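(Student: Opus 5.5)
We treat the three parts in order, using Lemma~\ref{lem:identity} throughout to identify the target risk $\mathcal{R}_{\mathcal{A}}(\theta):=\mathbb{E}_{P_{\mathcal{A}}}[\ell(Y,f_\theta(X))]$ with the augmented population risk $R(\theta)$. Then $\mathcal{E}(\theta)=R(\theta)-R(\theta_0)$, and $\theta_0$ is the interior minimizer guaranteed by Assumption~\ref{ass:ident}.

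\emph{Part (a).} We Taylor-expand $R$ to second order around $\theta_0$. By Assumption~\ref{ass:aug}(ii)--(iii), dominated convergence lets us differentiate under both expectations, so $R$ is twice continuously differentiable near $\theta_0$. Its gradient there is $\Psi(\theta_0)=0$, and its Hessian is $H=A_0$. Hence
\[
\mathcal{E}(\theta)=\tfrac12(\theta-\theta_0)^\top H(\theta-\theta_0)+r(\theta),\qquad |r(\theta)|=o(\|\theta-\theta_0\|^2).
\]
Now substitute $\theta=\hat\theta_{\mathrm{SCDA}}$. Theorem~\ref{thm:consistency} gives $\hat\theta_{\mathrm{SCDA}}\to\theta_0$ in probability, so the remainder satisfies $n\,r(\hat\theta_{\mathrm{SCDA}})=o_p(n\|\hat\theta_{\mathrm{SCDA}}-\theta_0\|^2)$. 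Theorem~\ref{thm:clt} gives $\sqrt n(\hat\theta_{\mathrm{SCDA}}-\theta_0)\xrightarrow{d}W\sim\mathcal N(0,\Sigma)$, so $n\|\hat\theta_{\mathrm{SCDA}}-\theta_0\|^2=O_p(1)$ and the remainder is $o_p(1)$. The continuous mapping theorem applied to $w\mapsto \tfrac12 w^\top H w$ then yields $n\,\mathcal{E}(\hat\theta_{\mathrm{SCDA}})\xrightarrow{d}\tfrac12W^\top HW$, and tightness of the limit gives $\mathcal{E}(\hat\theta_{\mathrm{SCDA}})=O_p(n^{-1})$.

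\emph{Part (b).} First we need $\hat\theta_{\mathrm{ERM}}\xrightarrow{p}\theta_{\mathrm{ERM}}$, where $\theta_{\mathrm{ERM}}$ is the minimizer of $\mathbb{E}_{P_0}[\ell(Y,f_\theta(X))]$. We obtain this by rerunning the argument of Theorem~\ref{thm:consistency} with the trivial augmentation $\mathcal{A}(x)=\delta_x$. That requires the ERM analogues of Assumptions~\ref{ass:loss}--\ref{ass:ident}: the ERM risk has a well-separated unique minimizer and the envelope conditions hold for the unaugmented loss. The statement does not list these, so we will add them as standing ERM regularity; this is the main point needing care.

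Given that, $\mathcal{E}$ is continuous on the compact set $\Theta$ by dominated convergence, and the continuous mapping theorem gives $\mathcal{E}(\hat\theta_{\mathrm{ERM}})\xrightarrow{p}\Delta:=\mathcal{E}(\theta_{\mathrm{ERM}})$. This $\Delta$ is a fixed population quantity, independent of $n$. Since $\theta_{\mathrm{ERM}}\neq\theta_0$ and $\theta_0$ is the unique minimizer of $R$, well-separation in Assumption~\ref{ass:ident} gives $\Delta=R(\theta_{\mathrm{ERM}})-R(\theta_0)>0$.

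\emph{Part (c).} We bound the complementary event:
\[
\mathbb{P}\big(\mathcal{E}(\hat\theta_{\mathrm{SCDA}})\ge\mathcal{E}(\hat\theta_{\mathrm{ERM}})\big)\le \mathbb{P}\big(\mathcal{E}(\hat\theta_{\mathrm{SCDA}})\ge\Delta/2\big)+\mathbb{P}\big(\mathcal{E}(\hat\theta_{\mathrm{ERM}})\le\Delta/2\big).
\]
The first term tends to zero because $\mathcal{E}(\hat\theta_{\mathrm{SCDA}})=O_p(n^{-1})=o_p(1)$ by part (a). The second tends to zero by part (b). Hence $\mathbb{P}\big(\mathcal{E}(\hat\theta_{\mathrm{SCDA}})<\mathcal{E}(\hat\theta_{\mathrm{ERM}})\big)\to1$.

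The only delicate step is the consistency of ERM for its own pseudo-true value $\theta_{\mathrm{ERM}}$ in part (b), since its regularity conditions are not explicitly among the stated assumptions. Everything else is routine delta-method and continuous-mapping reasoning.
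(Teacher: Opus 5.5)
Your proposal is correct and follows essentially the same route one would expect of the paper's argument. For (a) you use a second-order expansion of the target risk at $\theta_0$ (zero gradient and Hessian $H$, via Lemma~\ref{lem:identity}) combined with Theorem~\ref{thm:clt} and the continuous mapping theorem. For (b) you use consistency of $\hat\theta_{\mathrm{ERM}}$ for its pseudo-true value, together with continuity of $\mathcal{E}$ and uniqueness of $\theta_0$. For (c) you compare an $o_p(1)$ quantity against one bounded away from zero, via the $\Delta/2$ split.

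The ERM regularity you flag in (b) is also implicit in the paper, which treats $\hat\theta_{\mathrm{ERM}}$ as a standard M-estimator with its own limit $\theta_{\mathrm{ERM}}$. Stating it explicitly as a standing assumption is therefore the right move.
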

\begin{proof}
See Supplementary Section~\ref{S-pf:eff}.
\end{proof}

Thus SCDA's risk under the target vanishes at the parametric rate, while ERM's remains
bounded away from optimal by a fixed gap $\Delta$. In the Gaussian location model of
Section~\ref{sec:simulations}, $\Delta = \rho^2(3\rho^2+1)/(1+\rho^2)^2$ in closed form
(Supplementary Section~\ref{S-pf:eff}).

The excess-risk comparison against ERM concerns \emph{bias}. We now turn to
\emph{variance}, and establish a stronger positive property: SCDA is oracle-efficient for the
target and its efficiency does not depend on which valid augmentation is used.

\begin{theorem}[Oracle efficiency and augmentation-invariance]\label{thm:oracle}
Write the score in the canonical form $\nabla_\theta \ell(Y,f_{\theta_0}(X)) =
r(X,Y)\,\nabla_\theta f_{\theta_0}(X)$, where $r(X,Y) = \partial \ell/\partial \eta$ evaluated at
$\eta=f_{\theta_0}(X)$ is the outcome residual (e.g.\ $r=-2(Y-f_{\theta_0})$ for squared loss,
$r=f_{\theta_0}-Y$ for logistic loss). Suppose Assumptions~\ref{ass:loss}--\ref{ass:nonsing} hold
and the model is correctly specified on $X_1$, in the sense that $\mathbb{E}[r(X,Y)\mid X_1]=0$ at
$\theta_0$. Let $\mathcal{Q}$ be the class of valid augmentation kernels (preserving $X_1$ and
rendering $\tilde X_2\perp Y\mid X_1$). Then, for the target parameter $\theta_1$:
\begin{enumerate}[(a)]
\item \textbf{(Invariance.)} The asymptotic variance of
$\hat\theta_{1,\mathrm{SCDA}}$ is the same for every $\mathcal{A}\in\mathcal{Q}$.
\item \textbf{(Oracle efficiency.)} That common variance equals the asymptotic
variance of the oracle estimator that regresses $Y$ on $X_1$ alone; SCDA attains oracle
efficiency without the analyst identifying or isolating the features of interest.
\item \textbf{(Finite-$m$ refinement.)} With $m$ Monte Carlo draws per
observation, the only kernel-dependent term is the inflation
$V_{\mathrm{MC}}(\mathcal{A})/m$ of Proposition~\ref{prop:mcvar}, which is nonnegative, equals
zero for deterministic kernels, and is otherwise strictly positive.
\end{enumerate}
\end{theorem}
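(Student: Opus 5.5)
The plan is to compute the sandwich $\Sigma = A_0^{-1}B_0A_0^{-\top}$ of Theorem~\ref{thm:clt} explicitly in the canonical score form and show that its $\theta_1$ block depends on $\mathcal{A}$ only through quantities fixed by validity. Write $\theta=(\theta_1,\theta_2)$, where $\theta_1$ indexes the $X_1$-part of $f_\theta$. As in the paper's applications, take $f_\theta$ to be a generalized linear index, $f_\theta(x)=h(x_1^\top\theta_1+x_2^\top\theta_2)$, so the ``oracle regressing $Y$ on $X_1$'' is the submodel $\theta_2=0$.

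\textbf{Step 1: locate $\theta_0$.} Use Lemma~\ref{lem:identity}: $\theta_0$ minimizes risk under $P_{\mathcal{A}}$. Under any $\mathcal{A}\in\mathcal{Q}$ we have $\tilde X_2\perp Y\mid X_1$. Together with correct specification on $X_1$, this makes $\theta_0=(\theta_1^*,0)$ solve $\Psi(\theta)=0$. Indeed $r$ then depends only on $(X_1,Y)$, and
\[
\mathbb{E}[r\,\nabla_\theta f(\tilde X)] = \mathbb{E}\{\mathbb{E}[r\mid X_1,\tilde X_2]\,\nabla_\theta f\} = 0 .
\]
Identifiability (Assumption~\ref{ass:ident}) makes this the unique root, for every valid kernel. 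So the target is kernel-free, and $\theta_1^*$ equals the oracle parameter.

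\textbf{Step 2: compute $B_0$.} The score is $\psi=\mathbb{E}_{\tilde X}[r(X_1,Y)\,\nabla_\theta f_{\theta_0}(X_1,\tilde X_2)]$. At $\theta_2=0$ we have $\nabla_\theta f = h'(X_1^\top\theta_1^*)\,(X_1,\tilde X_2)$, and the inner expectation over $\tilde X_2\sim q(\cdot\mid X)$ gives
\[
\psi = r\,h'\,\bigl(X_1,\ \bar m(X)\bigr), \qquad \bar m(X)=\mathbb{E}_q[\tilde X_2\mid X].
\]
Hence the $\theta_1$-component of $\psi$ is exactly the oracle score $r\,h'\,X_1$, regardless of $\mathcal{A}$.

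\textbf{Step 3: compute $A_0$.} Differentiate $\Psi$ at $\theta_0$. The term multiplying $\mathbb{E}[r\mid X_1,\tilde X_2]=0$ drops out, leaving
\[
A_0=\mathbb{E}\bigl[w(X_1)\,\tilde Z\tilde Z^\top\bigr], \qquad \tilde Z=(X_1,\tilde X_2),
\]
with weight $w=\partial^2\ell\,(h')^2$. Its off-diagonal block is $\mathbb{E}[w\,X_1\tilde X_2^\top]$.

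\textbf{Step 4: reduce the $\theta_1$ block (the main obstacle).} Showing that the $\theta_1$ block of $A_0^{-1}B_0A_0^{-\top}$ reduces to $A_{11}^{-1}B_{11}A_{11}^{-1}$ requires controlling the cross terms $A_{12}$ and $B_{12}$, and these need not vanish for a general valid kernel. Conditional generation from $p(X_2\mid X_1)$ correlates $\tilde X_2$ with $X_1$. I would try two routes.
\begin{enumerate}[(i)]
\item \emph{Orthogonality.} Prove that $A_{12}A_{22}^{-1}\bar m$-type terms recombine so that the efficient influence function for $\theta_1$, namely $A_{11\cdot2}^{-1}(\psi_1-A_{12}A_{22}^{-1}\psi_2)$, equals the oracle influence function. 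This would use that $\psi_2=r\,h'\,\bar m(X)$ and that $r$ has conditional mean zero given $X_1$.
\item \emph{Restriction.} If (i) fails in general, restrict to kernels for which $\mathbb{E}_q[\tilde X_2\mid X]$ is a function of $X_1$. This covers permutation and masking, and conditional generation whenever $q$ depends on $x$ only through $x_1$. Then project $\tilde X_2$ on $X_1$ in the $w$-weighted inner product, and check that both the Hessian and the score for $\theta_1$ reduce to the oracle ones after partialling out.
\end{enumerate}
Either route yields (a) and (b) simultaneously, since the resulting expression contains no $q$.

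\textbf{Step 5: part (c).} Apply Proposition~\ref{prop:mcvar}. The only added term is $A_0^{-1}V_{\mathrm{MC}}A_0^{-\top}/m$, with
\[
V_{\mathrm{MC}}=\mathbb{E}\bigl[r^2 (h')^2\,\mathrm{Var}(\tilde Z\mid X,Y)\bigr].
\]
This is positive semidefinite. It is zero when $q$ is a point mass, since then the conditional variance vanishes. Otherwise it is strictly positive, provided $r\neq0$ with positive probability. Because $\mathrm{Var}(\tilde Z\mid X,Y)$ is nonzero only in the $\tilde X_2$ coordinates, the $\theta_1$ block is affected only through $A_{12}$. This matches the remark after \eqref{eq:sigma_m} that the correction is negligible for the target when those cross terms vanish.
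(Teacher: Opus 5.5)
\textbf{There is a genuine gap in Step 4.} All of (a)--(b) rests on that step, and you leave it as two routes you have not carried out. Neither route can close it, because at the stated generality the conclusion is false.

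By block inversion, the $\theta_1$ influence function is $-A_{11\cdot2}^{-1}(\psi_1-A_{12}A_{22}^{-1}\psi_2)$. Matching the oracle $-A_{11}^{-1}\psi_1$ requires, in general, either $A_{12}=0$ or that $\bar m(X)$ be linear in the $\theta_1$-regressors.
\begin{enumerate}[(i)]
\item Route (i) cannot help. The condition $\mathbb{E}[r\mid X_1]=0$ only gives $\mathbb{E}\psi_2=0$, whereas the influence function involves $\psi_2$ pathwise.
\item Route (ii) breaks exactly where you partial out. After $\theta_1'=\theta_1+\beta\theta_2$, the Hessian and score of $\theta_1'$ are indeed the oracle ones. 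But the target is $\theta_1=\theta_1'-\beta\theta_2$, and $\hat\theta_2$ still fluctuates at rate $n^{-1/2}$ whenever $\bar m$ is a nonlinear function of $X_1$.
\end{enumerate}

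Here is a concrete counterexample. Take squared loss, $f_\theta(x)=\alpha+\theta_1x_1+\theta_2x_2$, and $Y=\alpha+\theta_1X_1+\varepsilon$ with $\operatorname{Var}(\varepsilon\mid X_1)=\sigma^2$. Use the valid kernel $\tilde X_2=m(X_1)+\tilde\eta$, where $\tilde\eta$ is fresh bounded noise. Let $\beta_0+\beta_1X_1$ be the $L^2$ projection of $m$ on $\{1,X_1\}$, and set $d=m-\beta_0-\beta_1X_1$ and $u=\tilde X_2-\beta_0-\beta_1X_1$. The nuisance score is then $-2\varepsilon\, d(X_1)$, and
\[
\mathrm{IF}(\theta_1)=\mathrm{IF}_{\mathrm{oracle}}-\beta_1\,\varepsilon\,d(X_1)/\mathbb{E}u^2,
\qquad
\mathrm{avar}(\hat\theta_1)=\mathrm{avar}_{\mathrm{oracle}}+\sigma^2\beta_1^2\,\mathbb{E}d^2/(\mathbb{E}u^2)^2 .
\]
The cross term vanishes because $d\perp\{1,X_1\}$. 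For $X_1\sim\mathrm{Unif}[-1,1]$ and $m(X_1)=X_1^3$ we get $\beta_1=3/5$, so the excess is strictly positive. Permutation ($m$ constant) gives exactly the oracle variance in the same model. Hence both invariance and oracle efficiency fail.

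Even permutation fails without an intercept. Write $a=\mathbb{E}X_1^2$, $b=\mathbb{E}X_1$, $\mu_2=\mathbb{E}X_2$, $c=\mathbb{E}X_2^2$. If $b\mu_2\neq0$, the excess over the oracle is $\sigma^2b^2\mu_2^4\operatorname{Var}(X_1)/\{a(ac-b^2\mu_2^2)^2\}$.

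\textbf{What is missing is an extra structural hypothesis.} One sufficient form is that $\mathbb{E}_q[\nabla_{\theta_2}f_{\theta_0}(X_1,\tilde X_2)\mid X]$ lies in the $w$-weighted linear span of $\nabla_{\theta_1}f_{\theta_0}(X_1)$. In your GLM notation this reads $\bar m(X)=\beta^\top X_1$ with $\beta=\mathbb{E}[wX_1X_1^\top]^{-1}\mathbb{E}[wX_1\tilde X_2^\top]$. The paper's designs satisfy it: permuting a mean-zero nuisance gives $A_{12}=0$, and Gaussian conditional generation has a linear conditional mean.

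Under this hypothesis, your route (ii) closes. $A_0$ is block diagonal in $(\theta_1',\theta_2)$, and $A'_{11}$ and $\psi_1'$ are the oracle Hessian and score. Moreover $\psi_2'=rh'(\bar m-\beta^\top X_1)=0$, so $\mathrm{IF}(\theta_1)=\mathrm{IF}(\theta_1')$ is the oracle influence function for every kernel in the restricted class.

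\textbf{Step 5 needs the same care.} With finite $m$, the per-draw nuisance score $rh'(\tilde X_2-\beta^\top X_1)$ does not vanish. So $\theta_1$ picks up the term $\beta(A'_{22})^{-1}V'_{\mathrm{MC},22}(A'_{22})^{-1}\beta^\top/m$. This is zero only when $\beta=0$, i.e.\ $A_{12}=0$; that is the only case in which the text's claim that the correction is negligible for the target holds.

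Finally, the $\theta_1$ block of $V_{\mathrm{MC}}$ is identically zero. So ``strictly positive'' in (c) can only mean nonzero positive semidefinite, under your proviso that $r\neq0$ where the kernel is nondegenerate.
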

\begin{proof}
See Supplementary Section~\ref{S-pf:oracle}.
\end{proof}

The practical reading of Theorem~\ref{thm:oracle} is that the choice among valid
augmentations (permutation, conditional generative, masking) is asymptotically irrelevant for
the target and should be governed by finite-$m$ cost and by validity/support considerations
(Section~\ref{sec:practical_guidance}) rather than by asymptotic efficiency.

\subsection{Comparison with Inverse Probability Weighting}

Inverse probability weighting (IPW) \citep{horvitz1952generalization,rosenbaum1983central} offers an
alternative that reweights samples, with score
$\Psi^{\mathrm{IPW}}_n(\theta) = n^{-1}\sum_i w(X_i)\nabla_\theta\ell(Y_i,f_\theta(X_i))$ and weights
$w(X_i)=p(X_{i1})/p(X_{i1},X_{i2})$ estimated from data. SCDA differs in two ways. First, it targets a
different object, the risk minimizer under $P_{\mathcal{A}}$ (Lemma~\ref{lem:identity}), rather than
the reweighted population, so we do not assert a covariance ordering. Second, the IPW score multiplies
the gradient by a random weight $w(X)$, whose variability (compounded by estimation error) inflates
its sandwich meat, whereas the SCDA score is an expectation with no random weight. The qualitative
consequence is that SCDA achieves bias correction by modifying covariates rather than reweighting,
avoiding the weight-induced instability that affects IPW under strong spurious association. Like model
selection and doubly-robust estimation, IPW is a functional of the observed law and cannot reach the
target in the entangled regime of Section~\ref{sec:sim_entangled}.
\subsection{Robustness and Influence Function Analysis}

The SCDA estimator has influence function $\mathrm{IF}(Z) = -A_0^{-1}\psi(Z;\theta_0)$ with
$\psi(Z;\theta)=\mathbb{E}_{\tilde X\sim\mathcal{A}(X)}\nabla_\theta\ell(Y,f_\theta(\tilde X))$, the
standard M-estimator form. When occasional augmentation errors (perturbing $X_1$, or leaving residual
$X_2$--$Y$ dependence) contribute a bounded increment to the score, $\|\mathrm{IF}(Z)\|$ stays
uniformly bounded, giving SCDA \emph{B-robustness} \citep{huber1981robust,hampel1986robust}: isolated
augmentation errors add negligible noise without destabilizing the estimator. Proposition~\ref{prop:approx_preserve}
makes this precise for rare perturbations of $X_1$; Section~\ref{sec:sensitivity} treats systematic
invalidity, where the bounded-increment condition fails, by sensitivity analysis rather than
robustness.

\section{Sensitivity Analysis for Augmentation Validity}
\label{sec:sensitivity}
\begingroup

The results so far assume the augmentation is \emph{valid}: the resampled nuisance satisfies
$\tilde X_2 \perp Y \mid X_1$ (Assumption~\ref{ass:aug}), so that treating $X_2$ as pure nuisance is
correct. This assumption is not testable when $X_1$ and $X_2$ are entangled in the observed
representation (the regime of Section~\ref{sec:sim_entangled}); Proposition~\ref{prop:untestable}
below makes this precise, so the residual dependence is not identified. Rather than impose it, we ask how far
the conclusions can move if validity fails, and answer with sharp bounds and a breakdown frontier
obtained by indexing departures from validity by a single parameter $\Gamma$, in the spirit of
sensitivity analysis for untestable assumptions \citep{rosenbaum2002observational,manski2003partial}.
The point-identified theory of Sections~\ref{sec:theory}--\ref{sec:efficiency} is recovered as the
$\Gamma=0$ special case.

\subsection{A sensitivity model}

Validity requires that, in the augmented data, the resampled nuisance carries no residual
information about the outcome beyond $X_1$. We index departures from validity by the residual
partial dependence.

\begin{definition}[$\Gamma$-valid augmentation]\label{def:gammavalid}
For $\Gamma\ge 0$, an augmentation mechanism $\mathcal{A}$ is \emph{$\Gamma$-valid} if it preserves
$X_1$, matches the identified conditional law of $X_2$ given $X_1$ in its first two moments, and its
residual partial correlation with the outcome is bounded,
\[
|\gamma(\mathcal{A})| \;:=\; \bigl|\operatorname{Corr}\!\big(\tilde X_2,\,Y \,\big|\, X_1\big)\bigr|
\;\le\; \Gamma .
\]
$\Gamma=0$ is exact validity (Assumption~\ref{ass:aug}); $\Gamma$ bounds the residual
nuisance--outcome correlation that the augmentation fails to remove.
\end{definition}

The scale $\Gamma$ is interpretable and, in principle, boundable from domain knowledge or from an
auxiliary validation sample: it is the largest conditional correlation between the resampled nuisance
and the outcome that the analyst is willing to entertain. We take the identified conditional law of
$X_2$ given $X_1$ as known (it is estimable from the observed data); the only unidentified freedom is
the residual outcome dependence, which $\Gamma$ constrains.

The sensitivity model is not a conservative add-on: in the entangled regime the residual dependence
$\gamma$ is genuinely unidentified, so no procedure can recover it from the data.

\begin{proposition}[Untestability of validity]\label{prop:untestable}
Consider the entangled setting in which the analyst observes $Z=MX$ with $M$ unknown and invertible,
together with the augmented input $\tilde Z$ and the outcome $Y$. There exist two latent
representations that generate the identical joint law of $(Z,\tilde Z,Y)$ but carry different
residual dependence: $\gamma=0$ under one and $\gamma>0$ under the other. Consequently no test (no
functional of the observed law) can distinguish a valid from an invalid augmentation, and $\gamma$ is
not identified.
\end{proposition}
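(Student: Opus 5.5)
The argument is an explicit non-identification construction: two latent models that agree on every observable and disagree only on the residual dependence $\gamma$. We will use a jointly Gaussian design, so that laws are determined by means and covariances and equality of observed laws reduces to matching one covariance matrix.

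\textbf{Model A.} Take $X_1=U_1$ and $X_2=U_2$, with $Y$ depending on $(U_1,U_2)$. The augmentation replaces $U_2$ by a fresh draw $\tilde U_2$ from the law of $U_2$ given $U_1$, independent of $Y$. Hence $\tilde X_2 \perp Y\mid X_1$ and $\gamma=0$. The analyst observes $Z=M(X_1,X_2)$ and $\tilde Z=M(X_1,\tilde X_2)$.

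\textbf{Model B.} Redefine the latent coordinates by an invertible linear map $L$, so that $X'=LX$, with observed mixing $M'=ML^{-1}$. Then $Z=M'X'$ is unchanged, and the same holds for $\tilde Z$ provided the augmentation is expressed through the same observed operation. We choose $L$ to mix part of the old $X_1$-direction into the new nuisance coordinate, so that the latent split into ``feature of interest'' and ``nuisance'' differs. Because $(Z,\tilde Z,Y)$ is literally the same random vector under both descriptions, its joint law is identical. Only the labelling of which latent component is $X_1$ has changed, and $M$ is unknown, so the data cannot fix this labelling.

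\textbf{Computing $\gamma$ under Model B.} Under Model B we compute $\operatorname{Corr}(\tilde X_2',Y\mid X_1')$ in closed form from Gaussian partial correlations. Choose $L$ so that $\tilde X_2'$ contains a component of $\tilde Z$ that co-moves with $Y$ but is not a function of $X_1'$. A convenient choice is to let $X_1'$ be a proper subcomponent of $X_1$, and let $X_2'$ absorb the remainder of $X_1$, which is preserved by the augmentation and predictive of $Y$. Then $\tilde X_2'$ retains that predictive remainder, and its partial correlation with $Y$ given $X_1'$ is a nonzero explicit number. Model B must also satisfy the remaining structural requirements of Definition~\ref{def:gammavalid}:
\begin{itemize}
\item[(i)] \emph{preservation of $X_1'$:} this is automatic, since $X_1'$ is part of the preserved block;
\item[(ii)] \emph{moment matching of the law of $\tilde X_2'$ given $X_1'$:} this holds because $(X_1',\tilde X_2')$ has the same joint Gaussian law as $(X_1',X_2')$, given that $(X_1,\tilde X_2)\overset{d}{=}(X_1,X_2)$.
\end{itemize}

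\textbf{Conclusion.} Any test is a functional of the law of $(Z,\tilde Z,Y)$. That law is identical under the two models, while $\gamma=0$ in Model A and $\gamma>0$ in Model B. Therefore no test can separate valid from invalid augmentation, and $\gamma$ is not identified.

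The main obstacle is making Model B's reinterpretation legitimate: the augmentation in B must still count as ``preserving $X_1'$'' and be the same observable operation $Z\mapsto\tilde Z$. Taking $X_1'$ to be a sub-block of the old $X_1$ keeps preservation trivially true; the delicate checks are nondegeneracy ($\gamma>0$ strictly) and moment matching, which the Gaussian computation settles. A scalar instance with $X_1\in\mathbb{R}^2$ and $Y=X_{11}+X_{12}+\varepsilon$ suffices to exhibit the claim concretely.
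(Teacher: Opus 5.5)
Your argument is correct, but it takes a different route from the paper. The supplement exhibits a specific numerical pair of representations, with $\gamma=0$ and $\gamma\approx0.21$, chosen to sit in the range of $\Gamma$ values used in the sensitivity analysis. You instead hold $(Z,\tilde Z,Y)$ literally fixed and change only which latent coordinates are designated as the feature of interest. Observational equivalence is then immediate, and the logic is distribution-free: Gaussianity is needed only to evaluate $\gamma$. Your construction also isolates the actual source of untestability, namely that an unknown $M$ hides whether the augmentation preserves exactly the feature of interest or strictly more. The paper's route gives a concrete instance on the scale of its breakdown analysis; yours gives transparency and generality.

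Two points should be made explicit. First, the sub-block device is essential, not a convenience. With scalar blocks and a linear representation, preservation of $X_1'$ forces $L_{12}=0$. Mixing an $X_1$-direction into the nuisance then only adds a function of $X_1'$, so $|\gamma|$ is unchanged. In fact $|\gamma|$ is identified in that case:
\begin{itemize}
\item $Z-\tilde Z=(X_2-\tilde X_2)Me_2$ reveals the direction $Me_2$;
\item $u^\top Z$ with $u\perp Me_2$ is a nonzero multiple of $X_1$;
\item for any $v$ with $v^\top Me_2\neq0$, the partial correlation of $v^\top\tilde Z$ with $Y$ given $u^\top Z$ equals $\pm\gamma$.
\end{itemize}
Your construction therefore genuinely requires $\dim X_1\ge 2$, with Model B's feature of interest a proper sub-block of what the augmentation preserves. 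Second, Model B's nuisance $\tilde X_2'=(X_{12},\tilde X_2)$ is vector-valued, so you should state which scalar plays the role of $\gamma$ in Definition~\ref{def:gammavalid}. For the $X_{12}$ coordinate in your instance with independent standard normal inputs, $\operatorname{Corr}(X_{12},Y\mid X_{11})=1/\sqrt2$.
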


\begin{proof}
See Supplementary Section~\ref{S-sec:sensitivity_proof}; the two representations there carry
$\gamma=0$ and $\gamma\approx0.21$, values spanning the $\Gamma$ range analyzed below.
\end{proof}

\subsection{Sharp bounds on the target}

Let $\theta_1(\mathcal{A})$ denote the target coordinate of the SCDA estimand under augmentation
$\mathcal{A}$, and let $\theta_{1,0}=\theta_1(\mathcal{A})|_{\gamma=0}$ be the valid-augmentation
target of Lemma~\ref{lem:identity}. Define the identified interval
\[
\Theta_1(\Gamma) \;=\; \big\{\, \theta_1(\mathcal{A}) : \mathcal{A}\ \text{is}\ \Gamma\text{-valid} \,\big\}.
\]

\begin{proposition}[Sharp sensitivity interval]\label{prop:sensitivity}
Under the linear--Gaussian working model, $\Theta_1(\Gamma) = [\theta_1^-(\Gamma),\,\theta_1^+(\Gamma)]$
is a closed interval with endpoints
\[
\theta_1^\pm(\Gamma) \;=\; \theta_1\big(c_2^{\pm}(\Gamma)\big),
\]
where $\theta_1(c_2)$ is the (closed-form) target coordinate when the augmentation carries
Y-coefficient $c_2$, and $c_2^{\pm}(\Gamma)$ are the two solutions of $\gamma(c_2)=\pm\Gamma$. The
interval is nested and monotone in $\Gamma$, with $\Theta_1(0)=\{\theta_{1,0}\}$, and the endpoints
are attained (the bounds are sharp).
\end{proposition}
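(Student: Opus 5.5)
Under the linear--Gaussian working model the proof amounts to explicit Gaussian algebra. I would parametrize the whole class of $\Gamma$-valid augmentations by a single scalar and then argue by continuity and monotonicity.

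\textbf{Step 1: reduce to one parameter.} Definition~\ref{def:gammavalid} says a $\Gamma$-valid augmentation preserves $X_1$ and matches the identified first two moments of $X_2\mid X_1$. In the linear--Gaussian model I would therefore write
\[
\tilde X_2 = \beta^\top X_1 + c_2\,(Y - \mathbb{E}[Y\mid X_1]) + \xi ,
\]
with $\xi$ Gaussian and independent of $(X_1,Y)$. The variance of $\xi$ is chosen so that $\operatorname{Var}(\tilde X_2\mid X_1)$ equals the identified conditional variance $\sigma_{2\mid 1}^2$, which forces $c_2^2\sigma_{Y\mid 1}^2\le\sigma_{2\mid1}^2$.

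Within the working model, the joint Gaussian law $P_{\mathcal{A}}$ of $(X_1,\tilde X_2,Y)$ depends on $\mathcal{A}$ only through $c_2$. By Lemma~\ref{lem:identity}, $\theta_1(\mathcal{A})$ is then the $X_1$-coordinate of the population least-squares coefficient under $P_{\mathcal{A}}$. This is a closed-form, continuous function $\theta_1(c_2)$ obtained from the Frisch--Waugh formula; it is rational in $c_2$.

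Likewise, the residual partial correlation is
\[
\gamma(c_2)=\frac{c_2\,\sigma_{Y\mid1}}{\sigma_{2\mid1}} .
\]
This is strictly increasing and odd, with $\gamma(0)=0$.

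\textbf{Step 2: identify the feasible set.} The set of $\Gamma$-valid augmentations corresponds exactly to the interval $C(\Gamma)=[c_2^-(\Gamma),c_2^+(\Gamma)]$, where $c_2^\pm$ solve $\gamma(c_2)=\pm\Gamma$. It follows that $\Theta_1(\Gamma)=\theta_1(C(\Gamma))$ is the continuous image of a compact interval, and hence a closed interval.

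\textbf{Step 3: endpoints are attained at $c_2^\pm$.} I need $\theta_1$ to be monotone on $C(\Gamma)$. Using Frisch--Waugh I expect
\[
\theta_1(c_2)=\theta_{1,0}-\kappa(c_2)\,\beta ,
\]
where $\kappa(c_2)=\operatorname{Cov}(\tilde X_2,Y\mid X_1)/\operatorname{Var}(\tilde X_2\mid X_1)$ is the $\tilde X_2$ coefficient. Since $\operatorname{Var}(\tilde X_2\mid X_1)$ is pinned at $\sigma_{2\mid1}^2$, $\kappa$ is linear in $c_2$, namely $\kappa(c_2)=c_2\sigma_{Y\mid1}^2/\sigma_{2\mid1}^2$. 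This makes $\theta_1$ affine and monotone in $c_2$.

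Consequently the extreme values are $\theta_1(c_2^\pm(\Gamma))$, relabelled as $\theta_1^\pm$ according to the sign of $\beta$. Sharpness is immediate, because the kernel with $c_2=c_2^\pm$ is itself $\Gamma$-valid.

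Nesting and monotonicity in $\Gamma$ follow from $C(\Gamma)\subseteq C(\Gamma')$ for $\Gamma\le\Gamma'$. At $\Gamma=0$ we have $C(0)=\{0\}$, so $\Theta_1(0)=\{\theta_{1,0}\}$.

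\textbf{Main obstacle.} I expect the hard part to be Step 3's monotonicity, in case the working model lets the noise variance adjust in a way that makes $\theta_1(c_2)$ non-affine. In that case I would fall back on two facts: continuity of $\theta_1$, and that its derivative is a nonvanishing multiple of $\beta$ on the feasible range. If monotonicity fails, the endpoints become $\min$ and $\max$ over $C(\Gamma)$, and closedness and sharpness still hold by compactness.
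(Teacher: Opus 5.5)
Your argument is internally sound for the parametrization you chose. If $c_2$ loads on the residual $Y-\mathbb{E}[Y\mid X_1]$, and the conditional mean and variance of $\tilde X_2$ given $X_1$ are pinned at their identified values (the literal reading of Definition~\ref{def:gammavalid}), then $\gamma$ is linear and $\theta_1$ is affine in $c_2$. This gives $\Theta_1(\Gamma)=[\theta_{1,0}-a\Gamma,\,\theta_{1,0}+a\Gamma]$ with $a=|\beta|\sigma_{Y\mid1}/\sigma_{2\mid1}$. But this cannot be the working model behind the proposition, because your interval is symmetric about $\theta_{1,0}$ for every choice of constants. In the running model ($\beta=\rho=0.8$, $\sigma_{Y\mid1}=1$, $\sigma^2_{2\mid1}=1+\rho^2$) it gives $\Theta_1(0.1)=[0.94,1.06]$, $\Theta_1(0.3)=[0.81,1.19]$ and $\Gamma^\star\approx0.80$. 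The paper instead reports $[0.87,1.11]$, $[0.57,1.25]$ and $\Gamma^\star=0.34$. Those values are reproduced, to the stated precision, by $\theta_1=\theta_{1,0}(1-\gamma^2)-k\gamma\sqrt{1-\gamma^2}$ with $\theta_{1,0}=1$ and $k\approx1.2$. That form is what one obtains when $c_2$ is a raw coefficient on $Y$ added to a fixed-variance draw, $\tilde X_2=bX_1+c_2Y+\xi$ with $\operatorname{Var}(\xi)=s^2$. Then $c_2$ also shifts $\mathbb{E}[\tilde X_2\mid X_1]$ by $c_2\theta_{1,0}X_1$ and inflates $\operatorname{Var}(\tilde X_2\mid X_1)$ to $c_2^2\sigma^2_{Y\mid1}+s^2$, so
\[
\theta_1(c_2)=\theta_{1,0}-\frac{(b+c_2\theta_{1,0})\,c_2\,\sigma^2_{Y\mid1}}{c_2^2\sigma^2_{Y\mid1}+s^2},\qquad
\gamma(c_2)=\frac{c_2\,\sigma_{Y\mid1}}{\sqrt{c_2^2\sigma^2_{Y\mid1}+s^2}},
\]
with $k=b\sigma_{Y\mid1}/s$. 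Your Step~1 normalization removes precisely the two effects that make the paper's closed form nonlinear.

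You anticipated this in your last paragraph, but the fallback does not deliver the proposition. Steps~1--2 survive: $\gamma$ is still odd and strictly increasing, so $C(\Gamma)=[c_2^-,c_2^+]$ and its image is a closed interval. However,
\[
\theta_1'(c_2)=-\sigma^2_{Y\mid1}\,\frac{b\,(s^2-\sigma^2_{Y\mid1}c_2^2)+2\theta_{1,0}s^2c_2}{(c_2^2\sigma^2_{Y\mid1}+s^2)^2}
\]
is not proportional to $b$. For $b\neq0$ its numerator is a quadratic with one root of each sign. So $\theta_1$ is monotone on $C(\Gamma)$ only for $\Gamma$ below $\bar\Gamma=\sin\bigl(\tfrac12\arctan(|k|/\theta_{1,0})\bigr)$, which is about $0.42$ in the running example. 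Beyond $\bar\Gamma$ the upper endpoint is the interior maximum $\tfrac12\theta_{1,0}+\tfrac12(\theta_{1,0}^2+k^2)^{1/2}$, not $\theta_1(c_2^\pm(\Gamma))$. If $b=0$, then $\theta_1=\theta_{1,0}(1-\gamma^2)$ is even, and the upper endpoint sits at $c_2=0$ for every $\Gamma$. Taking $\min$ and $\max$ over $C(\Gamma)$ preserves closedness, nesting and sharpness, but abandons the endpoint formula that is the actual claim. What is needed is an explicit sign analysis of $\theta_1'$ on $C(\Gamma)$. That analysis yields monotonicity, and hence endpoints at $c_2^\pm(\Gamma)$ (labels matched to the sign of $b$), for $\Gamma<\bar\Gamma$. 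This covers the reported values $\Gamma\le0.34$, but the endpoint formula must be stated with that restriction.
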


\noindent A closed-form expression for $\theta_1(c_2)$ and $\gamma(c_2)$ is given in
Supplementary Section~\ref{S-sec:sensitivity_proof}, together with the proof of sharpness. For the
running Gaussian model this yields, e.g., $\Theta_1(0.1)=[0.87,1.11]$ and
$\Theta_1(0.3)=[0.57,1.25]$ around $\theta_{1,0}=1$.

\begin{remark}[General case and local sensitivity]
Beyond the linear--Gaussian model, $\theta_1^\pm(\Gamma)$ are the supremum and infimum of
$\theta_1(\mathcal{A})$ over the convex set of $\Gamma$-valid kernels. Because the estimand solves the
moment condition $\mathbb{E}_{P_{\mathcal{A}}}[\nabla_\theta\ell]=0$ and the augmentation enters only
through the induced conditional law of the score, the optimization reduces to a one-dimensional search
over the residual-dependence direction, which is directly computable. For small $\Gamma$ the bounds
admit the first-order (influence-function) expansion
$\theta_1^\pm(\Gamma)=\theta_{1,0}\pm\Gamma\,\kappa + O(\Gamma^2)$, where the \emph{sensitivity slope}
$\kappa$ is the standardized covariance between the target influence function and the score direction
that a residual dependence would perturb, the analogue of the local-sensitivity measures used in the
unmeasured-confounding literature. We verify that the interval is well defined, monotone, and centered
at $\theta_{1,0}$ beyond the Gaussian case: in a logistic model with a non-Gaussian nuisance
(Supplementary Section~\ref{S-sec:sensitivity_proof}) the sharp interval is
$\Theta_1(0.1)\approx[1.02,1.42]$ and $\Theta_1(0.2)\approx[0.85,1.66]$ about $\theta_{1,0}\approx1.21$.
\end{remark}

\subsection{Breakdown frontier and inference}

\begin{definition}[Breakdown value]
For a claim of the form $\theta_1 > \tau$, the \emph{breakdown value} is
$\Gamma^\star(\tau)=\inf\{\Gamma:\ \theta_1^-(\Gamma)\le \tau\}$: the smallest augmentation invalidity
at which the claim can no longer be sustained.
\end{definition}

In the running model, the claim $\theta_1>0.5$ has $\Gamma^\star=0.34$: the conclusion is robust to any
augmentation whose residual nuisance--outcome correlation is below $0.34$. Reporting
$\Gamma^\star$ converts an untestable yes/no assumption into a graded, interpretable robustness
statement.

\begin{remark}[Estimation and inference]
For fixed $\Gamma$, each endpoint $\theta_1^\pm(\Gamma)$ is a smooth functional of identified moments
and is estimated by plug-in; $\sqrt n$-consistency and asymptotic normality of each endpoint follow
from the delta method under Assumptions~\ref{ass:loss}--\ref{ass:nonsing}. A confidence interval for
the identified set $\Theta_1(\Gamma)$ that attains nominal coverage of the \emph{true parameter}
uniformly is obtained by the Imbens--Manski construction \citep{imbens2004confidence}, widening each
estimated endpoint by the corresponding standard error. Full details are in Supplementary
Section~\ref{S-sec:sensitivity_proof}.
\end{remark}
\endgroup

\section{Iterative and Adaptive Statistical Counterfactual Data Augmentation}
\label{sec:iterative}

In practice, counterfactual augmentation is rarely applied in a single batch. 
Instead, practitioners often adopt an iterative procedure: one begins with an initial estimator, generates augmented data, refits the model, and repeats the procedure as the estimator evolves. 
We refer to this strategy as \emph{iterative statistical counterfactual data augmentation} (ICDA). 
Empirical studies in machine learning suggest that ICDA can accelerate debiasing and improve generalization performance \citep{plyler2025icda}. 
In this section, we formalize ICDA as a stochastic approximation procedure and establish its large-sample behavior.  

\subsection{Formulation as Stochastic Approximation}

Let $\hat\theta^{(0)}$ denote an initial estimator, such as the empirical risk minimizer. 
At iteration $t \geq 1$, ICDA generates augmented covariates $\tilde X_i^{(t)} \sim \mathcal{A}_{\hat\theta^{(t-1)}}(X_i)$ for $i=1,\ldots,n$, and updates the parameter by minimizing
\[
R_n^{(t)}(\theta) = \frac{1}{n}\sum_{i=1}^n \ell\!\bigl(Y_i,f_\theta(\tilde X_i^{(t)})\bigr).
\]
The updated estimator is therefore
\begin{equation}\label{eq:theta_t}
\hat\theta^{(t)} = \arg\min_{\theta \in \Theta} R_n^{(t)}(\theta).
\end{equation}
Writing $\Psi_n^{(t)}$ for the empirical score at iteration $t$, a Newton step on
\eqref{eq:theta_t} gives the stochastic-approximation recursion
$\hat\theta^{(t)} \approx \hat\theta^{(t-1)} - \gamma_t \hat A_t^{-1} \Psi_n^{(t)}(\hat\theta^{(t-1)})$
(exact when $R_n^{(t)}$ is quadratic), with $\hat A_t$ the empirical Jacobian and $\gamma_t$ an
implicit step size. This recursion approximates the ODE $\dot\theta = -A_0^{-1}\Psi(\theta)$, for
which $\theta_0$ is a locally asymptotically stable fixed point (the preconditioned drift has
Jacobian $-I$ at $\theta_0$ under Assumption~\ref{ass:orth}), with basin of attraction determined by
the well-separation in Assumption~\ref{ass:ident}.

\subsection{Asymptotic Equivalence with One-Shot SCDA}

The stochastic approximation framework allows us to establish rigorous equivalence between ICDA and one-shot SCDA. 
Under standard conditions on step sizes $\gamma_t \to 0$ with $\sum_t \gamma_t = \infty$ and $\sum_t \gamma_t^2 < \infty$ \citep{robbins1951stochastic,benveniste1990adaptive,kushner2003stochastic}, the ICDA recursion converges, for fixed data, to the solution of the empirical self-consistent equation; as $n\to\infty$ that solution converges to $\theta_0$, and the limiting distribution of the scaled error coincides with that of the one-shot SCDA estimator.

Because the ICDA augmentation is indexed by the evolving estimate, the limiting
score satisfies a self-consistency relation whose linearization at $\theta_0$ is $A_0 + G_0$,
where $G_0 = \partial_{\vartheta} \Psi(\theta_0;\vartheta)|_{\vartheta=\theta_0}$ measures the sensitivity of
the population score to the augmentation index. We require the following augmentation-orthogonality
condition.

\begin{assumption}[Augmentation orthogonality]\label{ass:orth}
$G_0 = 0$; equivalently, at $\theta_0$ the augmentation score is uncorrelated with
the loss gradient. This holds automatically when the augmentation is $\theta$-free, and, more
generally, whenever the adaptive kernel is valid at the fixed point (so that the nuisance block of
$\theta_0$ vanishes); it may fail for adaptive kernels that remain associated with $Y$.
\end{assumption}

\begin{theorem}[Asymptotic Distribution of ICDA]\label{thm:icda}
Suppose Assumptions \ref{ass:loss}--\ref{ass:nonsing} and \ref{ass:orth} hold, and ICDA is implemented with step sizes $\{\gamma_t\}$ satisfying $\gamma_t \to 0$, $\sum_t \gamma_t = \infty$, and $\sum_t \gamma_t^2 < \infty$. Then, taking $t\to\infty$ followed by $n\to\infty$,
\[
\sqrt{n}\bigl(\hat\theta^{(t)} - \theta_0\bigr) \;\overset{d}{\longrightarrow}\; \mathcal{N}(0,\Sigma),
\]
with the same asymptotic covariance matrix $\Sigma$ as the one-shot SCDA estimator in Theorem~\ref{thm:clt}. Without Assumption~\ref{ass:orth}, the limiting covariance is instead $(A_0+G_0)^{-1}B_0(A_0+G_0)^{-\top}$, which differs from $\Sigma$ when $G_0\neq 0$.
\end{theorem}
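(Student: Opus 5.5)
The proof has three stages. First we characterize the fixed point of the ICDA recursion for fixed data. Then we show that the fixed point is $\sqrt n$-close to $\theta_0$ with an explicit linearization. Finally we identify its limiting law, with and without Assumption~\ref{ass:orth}.

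Write $\Psi_n(\theta;\vartheta)=n^{-1}\sum_i \mathbb{E}_{\tilde X\sim\mathcal{A}_\vartheta(X_i)}\nabla_\theta\ell(Y_i,f_\theta(\tilde X))$ and $\Psi(\theta;\vartheta)$ for its population counterpart. Here $\Psi(\theta_0;\theta_0)=0$ by Assumption~\ref{ass:ident}, with $A_0=\partial_\theta\Psi$ and $G_0=\partial_\vartheta\Psi$ at $(\theta_0,\theta_0)$.

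\textbf{Step 1 (inner limit $t\to\infty$).} For fixed $n$, the recursion $\hat\theta^{(t)}=\hat\theta^{(t-1)}-\gamma_t\hat A_t^{-1}\Psi_n^{(t)}(\hat\theta^{(t-1)})$ is a Robbins--Monro scheme. Its mean field is $h_n(\theta)=-\hat A^{-1}\Psi_n(\theta;\theta)$. The noise is a martingale difference: the fresh augmentation draw minus its conditional mean given $\hat\theta^{(t-1)}$. Its conditional second moment is bounded uniformly via the envelope $G$ of Assumption~\ref{ass:aug}(iii) and Assumption~\ref{ass:mc}.

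Under $\sum\gamma_t=\infty$ and $\sum\gamma_t^2<\infty$, the ODE method (Kushner--Yin, Benveniste et al.) gives $\hat\theta^{(t)}\to\bar\theta_n$ almost surely, where $\bar\theta_n$ solves the self-consistent equation $\Psi_n(\bar\theta_n;\bar\theta_n)=0$. This requires $\bar\theta_n$ to be a locally asymptotically stable equilibrium of $\dot\theta=h_n(\theta)$ whose basin contains the iterates. The drift Jacobian is $-\hat A^{-1}(\hat A_n+\hat G_n)$, which converges to $-I-A_0^{-1}G_0$. This matrix equals $-I$ under Assumption~\ref{ass:orth}. Without orthogonality, we additionally assume it is Hurwitz, which is implicit in the claimed limit. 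Convergence into the basin is handled by taking $\hat\theta^{(0)}$ consistent or by a projection onto a compact neighbourhood, using well-separation.

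\textbf{Step 2 (consistency of $\bar\theta_n$).} Treat $(\theta,\vartheta)\mapsto\Psi_n(\theta;\vartheta)$ as a class indexed by the compact set $\Theta\times\Theta$. The argument for Theorem~\ref{thm:consistency} gives a uniform LLN, so $\sup|\Psi_n-\Psi|\to_p0$. The diagonal map $\theta\mapsto\Psi(\theta;\theta)$ has a unique well-separated zero at $\theta_0$, by Assumption~\ref{ass:ident} together with invertibility of $A_0+G_0$. Hence $\bar\theta_n\to_p\theta_0$.

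\textbf{Step 3 (linearization and CLT).} Expand $0=\Psi_n(\bar\theta_n;\bar\theta_n)$ jointly in both arguments about $(\theta_0,\theta_0)$:
\[
0=\Psi_n(\theta_0;\theta_0)+(A_0+G_0)(\bar\theta_n-\theta_0)+o_p(\|\bar\theta_n-\theta_0\|)+o_p(n^{-1/2}).
\]
The remainder control is the main obstacle. It needs stochastic equicontinuity of $\sqrt n\{\Psi_n-\Psi\}$ in the augmentation index $\vartheta$. I would obtain it by assuming $\vartheta\mapsto\mathcal{A}_\vartheta$ is smooth, with a dominating score $\|\partial_\vartheta\log q_\vartheta\|$ in $L^2$, so that the class is Lipschitz in parameters and hence Donsker. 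This is the parametric analogue of Assumption~\ref{ass:aug}(ii)--(iii) in $\vartheta$.

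Given the expansion, $\sqrt n\Psi_n(\theta_0;\theta_0)\Rightarrow\mathcal{N}(0,B_0)$ by the CLT exactly as in Theorem~\ref{thm:clt}. This yields
\[
\sqrt n(\bar\theta_n-\theta_0)\Rightarrow\mathcal{N}\bigl(0,(A_0+G_0)^{-1}B_0(A_0+G_0)^{-\top}\bigr).
\]
Under Assumption~\ref{ass:orth}, $G_0=0$ and this reduces to $\Sigma$. For the iterated limit, Step~1 gives $\hat\theta^{(t)}-\bar\theta_n\to0$ for each $n$. Applying Slutsky, first $t\to\infty$ and then $n\to\infty$, transfers the limit to $\hat\theta^{(t)}$.

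Finally, $G_0\neq0$ generically makes the sandwich differ from $\Sigma$, since $(A_0+G_0)^{-1}\neq A_0^{-1}$ whenever $B_0\succ0$. This establishes the last sentence of the theorem.
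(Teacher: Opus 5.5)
Your proposal is correct and takes essentially the paper's route. For fixed data, stochastic-approximation (ODE-method) convergence takes the iterates to the root $\bar\theta_n$ of the empirical self-consistent equation $\Psi_n(\theta;\theta)=0$; that root is consistent; a joint linearization has Jacobian $A_0+G_0$; $\sqrt n\,\Psi_n(\theta_0;\theta_0)$ satisfies a CLT with variance $B_0$; and everything collapses to $\Sigma$ when $G_0=0$. You are also more explicit than the paper about conditions the statement leaves implicit: a Hurwitz drift when $G_0\neq0$, and smoothness of $\vartheta\mapsto\mathcal{A}_\vartheta$.

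Two small repairs are needed.

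\textbf{Step~2.} Only local uniqueness of the zero of $\theta\mapsto\Psi(\theta;\theta)$ is available. It comes from the inverse function theorem, since the derivative there is the nonsingular $A_0+G_0$; Assumption~\ref{ass:ident} concerns the risk with the augmentation index held fixed and says nothing about the diagonal map away from $\theta_0$. So run the consistency argument on the compact neighbourhood you already project onto in Step~1.

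\textbf{Last paragraph.} The fact that $(A_0+G_0)^{-1}\neq A_0^{-1}$ does not imply that the sandwiches differ. Take $p=2$, $A_0=B_0=I$, and $I+G_0$ a rotation by an angle in $(0,\pi/2)$: the two sandwiches coincide, even though $G_0\neq0$ and the drift is Hurwitz. So ``generically'' is the most that can be claimed.
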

\begin{proof}
See Supplementary Section~\ref{S-pf:icda}.
\end{proof}
\begin{remark}
ICDA is not itself an M-estimator, since the augmentation distribution evolves across iterations. 
Its natural interpretation is as a stochastic approximation scheme whose recursion converges to the same fixed point $\theta_0$ as the one-shot SCDA estimator. 
This distinction clarifies why ICDA inherits the same asymptotic distribution as one-shot SCDA, despite its algorithmic differences.
\end{remark}

\subsection{Adaptive schedules and practical implications}

When augmentation parameters are adapted over training (increasing augmentation probability,
annealing noise), the schedule is a sequence $\{\mathcal{A}_{t}\}$ converging to a limit
$\mathcal{A}_{\infty}$; stochastic approximation with slowly varying drift
\citep{kushner2003stochastic,borkar2008stochastic} then gives convergence to the fixed point of the
limiting mechanism, with the one-shot limiting distribution. Thus iterative and adaptive SCDA offer
no asymptotic efficiency gain over one-shot SCDA (their advantage is finite-sample); for inference,
the one-shot theory of Sections~\ref{sec:theory}--\ref{sec:efficiency} suffices.

\section{Numerical Studies}
\label{sec:simulations}
\begingroup

We evaluate the finite-sample performance of the proposed counterfactual data augmentation (CDA) estimator in comparison with the empirical risk minimizer (ERM) and a residualization-based analogue of inverse probability weighting (IPW).

\subsection{Simulation Study}

We report three simulations, each targeting a distinct prediction from the theory:
(i) valid inference for the target parameter and the failure of ERM under spurious
correlation; (ii) the necessity of augmentation when features are entangled, where
observed-data methods provably cannot reach the target; and (iii) the finite-augmentation-draw
variance correction. Throughout, $X=(X_1,X_2)$ with $X_1$ the feature of primary interest and
$X_2$ a nuisance feature; the estimand is the risk minimizer under the augmentation-induced
target $P_{\mathcal{A}}$ (Lemma~\ref{lem:identity}). Data are generated from
\begin{equation}\label{eq:sim_model}
Y_i = \theta_1 X_{i1} + \varepsilon_i, \qquad X_{i2} = \rho Y_i + \eta_i,
\end{equation}
with $X_{i1},\varepsilon_i,\eta_i$ independent standard normal and $\theta_1=1$ the target
coefficient; $\rho$ sets the strength of the (training-time) spurious association between $X_2$
and $Y$, which is broken under the target $P_{\mathcal{A}}$.

\subsubsection{Inference for the target parameter}
We first take $X_1,X_2$ observed as separate features, so that inference on $\theta_1$ can be
examined directly. We compare ERM on $(X_1,X_2)$ with SCDA (permutation augmentation, $m=5$),
at $n=500$ over $1000$ replications, reporting bias, standard deviation, and coverage of
nominal $95\%$ intervals from the sandwich estimator $\hat\Sigma$ (Proposition~\ref{prop:var_est}).
Table~\ref{tab:sim} shows the pattern predicted by Theorem~\ref{thm:efficiency}: under strong
spurious correlation ($\rho=0.8$) ERM is badly biased ($-0.390$) with zero coverage, whereas
SCDA is essentially unbiased ($0.001$) with nominal $95\%$ coverage. Consistent with the oracle
efficiency of Theorem~\ref{thm:oracle}, SCDA's standard deviation matches that of the infeasible
oracle that regresses on $X_1$ alone.

\begin{table}[ht]
\caption{Inference for the target coefficient $\theta_1=1$ ($n=500$, $1000$ replications).
ERM regresses on $(X_1,X_2)$; SCDA uses permutation augmentation.}
\label{tab:sim}
\centering
\begin{tabular}{llccc}
\hline
$\rho$ & Method & Bias & SD & Coverage \\
\hline
$0$   & ERM  & $0.001$ & $0.045$ & $0.95$ \\
      & SCDA & $0.001$ & $0.044$ & $0.95$ \\
$0.8$ & ERM  & $-0.390$ & $0.041$ & $0.00$ \\
      & SCDA & $0.001$ & $0.044$ & $0.95$ \\
\hline
\end{tabular}
\end{table}

\subsubsection{Entangled features: when augmentation is necessary}
\label{sec:sim_entangled}
The comparison of primary interest is not against ERM on separated features, but in the regime
that motivates augmentation, where the analyst observes only an entangled representation
$Z = M X \in \mathbb{R}^2$ of the latent $(X_1,X_2)$, with $M$ an unknown invertible mixing
(here $M=[[1,1],[0.6,-0.8]]$). No coordinate of $Z$ is a pure function of $X_1$. We compare
methods that operate on the observed law (ERM on $Z$, best-subset selection over the coordinates
of $Z$, and a doubly-robust / double-machine-learning estimator \citep{robins1994estimation,chernozhukov2018double}) against
SCDA, which resamples the nuisance with $X_1$ held fixed. Table~\ref{tab:entangled} reports
excess risk under the target $P_{\mathcal{A}}$ (irreducible risk $1.000$), both at the population
level and at $n=500$.

\begin{table}[ht]
\caption{Excess risk under the target $P_{\mathcal{A}}$ in the entangled representation
$Z=MX$. The three observed-law methods coincide with the $P_0$-risk minimizer and hence share the
same fixed excess risk $\Delta$; no observed-law method attains the target, whereas SCDA does
(Theorem~\ref{thm:efficiency}).}
\label{tab:entangled}
\centering
\begin{tabular}{lcc}
\hline
Method & population excess & $n=500$ excess \\
\hline
ERM on $Z$              & $0.694$ & $0.695$ \\
best-subset selection   & $0.694$ & $0.695$ \\
doubly-robust / double-ML & $0.694$ & $0.695$ \\
SCDA                    & $0.000$ & $0.002$ \\
\hline
\end{tabular}
\end{table}

Every method that uses only the observed training law (ERM, model selection, and
doubly-robust estimation alike) incurs the \emph{same} fixed excess risk
$\Delta=\rho^2(3\rho^2+1)/(1+\rho^2)^2 = 0.695$, exactly as Theorem~\ref{thm:efficiency}
predicts, because the target predictor is not identified from $P_0$ when the representation is
entangled. The exact agreement among the three is itself predicted: the doubly-robust estimator
coincides with ERM by the Frisch--Waugh--Lovell identity in the linear model, and best-subset
selection retains the full model, which minimizes the $P_0$ risk, so all three estimate the same
$P_0$-risk minimizer. SCDA attains the irreducible risk by using the augmentation mechanism, which
supplies information beyond $P_0$. This is the setting the method is designed for; when a pure causal
coordinate is available (the separable case above), model selection suffices and augmentation is
unnecessary.

\subsubsection{Finite-augmentation-draw inference}
\label{sec:sim_finitem}
Proposition~\ref{prop:mcvar} predicts that with a finite number $m$ of augmentation draws the
variance carries an additional term $V_{\mathrm{MC}}/m$ that must be included for valid inference on
functionals loading on the augmented feature. For a prediction functional $f_\theta(x_0)$ at
$x_0=(1,1.5)$ with $n=500$, the naive interval using the $m=\infty$ variance formula undercovers
substantially at small $m$ (coverage $0.83$ at $m=1$, $0.89$ at $m=2$), whereas the corrected
estimator $\hat\Sigma_m$ restores nominal $0.95$ coverage throughout (explicit variance-estimator
formulas in Supplementary Section~\ref{S-sec:practical_supp}). For the target parameter $\theta_1$ itself the correction is
negligible, consistent with Theorem~\ref{thm:oracle}.

Finally, when validity is uncertain the point estimate is replaced by the sharp sensitivity interval
$\Theta_1(\Gamma)$ of Section~\ref{sec:sensitivity}; for these designs this yields, e.g., breakdown
value $\Gamma^\star=0.34$ for the claim $\theta_1>0.5$, with full numerical results (Gaussian and
logistic) in Supplementary Section~\ref{S-sec:sensitivity_proof}.
\endgroup

\section{Discussion and Conclusion}

We have introduced \emph{Statistical Counterfactual Data Augmentation} (SCDA) as a principled framework for robust estimation under spurious correlations, placing augmentation within M-estimation theory. Our main results are: (i) consistency and asymptotic normality with sandwich-form covariance, including a finite-augmentation-draw variance correction for valid inference (Theorems~\ref{thm:consistency}--\ref{thm:clt}, Section~\ref{sec:montecarlo}); (ii) vanishing excess risk under the target distribution relative to ERM's fixed excess-risk gap (Section~\ref{sec:efficiency}); (iii) bounded influence under mild augmentation misspecification; and (iv) asymptotic equivalence of iterative and one-shot SCDA via stochastic approximation (Theorem~\ref{thm:icda}). Simulations confirm these predictions and delineate the regime in which augmentation is necessary: when the target predictor is not identified from the observed distribution (the entangled representation of Section~\ref{sec:sim_entangled}), methods operating on the observed law cannot reach it, whereas SCDA can.

\begingroup
\paragraph{When the nuisance carries weak signal.}
The target $\theta_0$ is defined by the analyst's choice to render $X_2$ uninformative about $Y$.
If that choice is only approximately warranted, so that $X_2$ carries a small genuine association of
strength $c$ with $Y$, then relative to the $P_0$-optimal predictor SCDA forgoes signal of order
$c^2\sigma^2_{2\cdot 1}$ (with $\sigma^2_{2\cdot 1}=\operatorname{Var}(X_2\mid X_1)$), while removing
the shift-induced excess $\Delta$ of Theorem~\ref{thm:efficiency}. The trade is governed by the
single ratio $\Delta / (c^2\sigma^2_{2\cdot1})$; both quantities are estimable on held-out data as
risk differences, the former under the augmented law and the latter under $P_0$, so no unidentified
quantity need be estimated and the decision is quantitative. SCDA is intended for $c$ small and
$\Delta$ appreciable.

\paragraph{Scope: nuisance designation versus mediation.}
The estimand is well defined for any partition $X=(X_1,X_2)$, but it is \emph{useful} only when $X_2$
is a genuine nuisance the analyst wishes to render uninformative. If instead $X_2$ mediates
($X_1\to X_2\to Y$ with $X_2$ carrying a real effect), marginalizing $X_2$ is the wrong operation:
the SCDA target then loads the total effect onto $\theta_1$ and does not recover the direct or
pathway effects, all of which ordinary regression delivers without augmentation. SCDA is therefore
not a tool for mediation analysis; its province is nuisance designation with entangled features and a
resampling mechanism. The designation of $X_2$ as nuisance is a modeling choice the analyst must
defend on substantive grounds; SCDA supplies valid inference given that choice.

\paragraph{Relation to reweighting, simulated-moment, and robust estimators.}
SCDA is a simulated-moment estimator (Section~\ref{sec:theory}) whose simulation distribution defines
the estimand rather than approximating an integral. It is distinct from inverse probability weighting
and doubly-robust / double-machine-learning estimators
\citep{robins1994estimation,chernozhukov2018double}: those target a low-dimensional parameter under
unconfoundedness using outcome and treatment models, and, like model selection, are functionals of
the observed law $P_0$. When the target is not identified from $P_0$ (the entangled regime of
Section~\ref{sec:sim_entangled}), no such estimator reaches it, whereas SCDA does by using the
augmentation mechanism. The approaches are thus complementary, applying under disjoint assumptions.
\endgroup

Several further limitations merit discussion. SCDA requires valid or approximately valid augmentation mechanisms, which may be challenging in domains such as genomics or longitudinal data where domain expertise is needed to design realistic interventions. Our asymptotic results assume correct specification of the augmentation mechanism; extensions to misspecified, semiparametric, or high-dimensional settings are natural directions for future work.

By linking augmentation to M-estimation and stochastic approximation, this work provides a rigorous statistical foundation for a widely used machine learning technique, bridging heuristic augmentation practices with the inferential guarantees central to statistical science.

\section*{Data Availability Statement}
This study uses only synthetic data generated within the provided code; no external datasets are required. Code reproducing every table and figure is provided in the reproducibility package, with fixed random seeds.

\begingroup
\singlespacing
\bibliography{cda_refs}
\endgroup

\end{document}